\newcommand{\perm}{\operatorname{perm}}
\DeclareSymbolFont{rsfscript}{OMS}{rsfs}{m}{n}
\DeclareSymbolFontAlphabet{\mathrsfs}{rsfscript}
\newcommand{\NP}{\textsf{NP}}
\newcommand{\PSPACE}{\textsf{PSPACE}}
\newcommand{\NPSPACE}{\textsf{NPSPACE}}
\newcommand{\PTIME}{\textsf{P}}
\newcounter{problemcounter}
\newcommand{\problemtitle}[1]{\gdef\@problemtitle{#1}}
\newcommand{\probleminput}[1]{\gdef\@probleminput{#1}}
\newcommand{\problemquestion}[1]{\gdef\@problemquestion{#1}}
  \par\addvspace{.5\baselineskip}
  \par\addvspace{.5\baselineskip}
\begin{document}
\title{Computational Complexity of Synchronization under Regular Commutative Constraints}
\titlerunning{Synchronization under Regular Commutative Constraints}
%
%
\author{Stefan Hoffmann\orcidID{0000-0002-7866-075X}}
\authorrunning{S. Hoffmann}
%
\institute{Informatikwissenschaften, FB IV, 
  Universit\"at Trier,  Universitätsring 15, 54296~Trier, Germany, 
  \email{hoffmanns@informatik.uni-trier.de}}
\maketitle              
%

 
\begin{abstract}

Here we study the computational complexity of
the constrained synchronization problem
for the class of regular commutative constraint languages.
Utilizing
a vector representation of regular commutative constraint languages, 
we give a full classification of the computational complexity 
of the constrained synchronization problem.
Depending on the constraint language, our problem becomes $\PSPACE$-complete, $\NP$-complete 
or polynomial time solvable. In addition, we derive a polynomial
time decision procedure for the complexity of the constrained
synchronization problem, given a constraint automaton 
accepting a commutative language as input.

\keywords{Constrained synchronization \and Computational complexity \and Automata theory \and Commutative language} 
\end{abstract}

\section{Introduction}

A deterministic semi-automaton is synchronizing if it admits a reset word, i.e., a word which leads to a definite
state, regardless of the starting state. This notion has a wide range of applications, from software testing, circuit synthesis, communication engineering and the like, see \cite{Vol2008,San2005}.  The famous \v{C}ern\'y conjecture \cite{Cer64}
states that a minimal synchronizing word has at most quadratic length. 
We refer to the mentioned survey articles for details. 
Due to its importance, the notion of synchronization has undergone a range of generalizations and variations
for other automata models.
It was noted in \cite{Martyugin12} that in some  generalizations only certain paths, or input words, are allowed (namely those for which the input automaton is defined). In \cite{Gusev:2012}
the notion of constrained synchronization was 
introduced in connection with a reduction procedure
for synchronizing automata.
The paper \cite{DBLP:conf/mfcs/FernauGHHVW19} introduced the computational problem of constrained 
synchronization. In this problem, we search for a synchronizing word coming from a specific subset of allowed
input sequences. For further motivation and applications we refer to the aforementioned paper \cite{DBLP:conf/mfcs/FernauGHHVW19}.
In this paper, a complete analysis of the complexity landscape when the constraint language is given by small partial automata was done. It is natural to extend this result to other language classes, or
even to give a complete classification of all the complexity classes that could arise.
Our work is in this vein, we will look at the complexity landscape for commutative regular constraint languages.


\section{Prerequisites}
\label{sec::prerequisites}

\subsection{General Notions and Problems Related to Automata and Synchronization}

By $\mathbb N_0 = \{0,1,2,\ldots\}$ we denote the natural numbers with zero. 
Setting $n < \infty$ for all $n \in \mathbb N_0$,
we will use the symbol $\infty$ in connection with $\mathbb N_0$.
Hence we regard $\mathbb N_0 \cup \{\infty\}$
as an ordered set with top element $\infty$.
Throughout the paper, we consider deterministic finite automata (DFAs).
Recall that a DFA~$\mathcal A$ is a tuple $\mathcal A = (\Sigma, Q, \delta, q_0, F)$,
where the alphabet $\Sigma$ is a finite set of input symbols,~$Q$ is the finite state set, with start state $q_0 \in Q$, and final state set $F \subseteq Q$.
The transition function $\delta : Q\times \Sigma \to Q$ extends to words from $\Sigma^*$ in the usual way. The function $\delta$ can be further extended to sets of states in the following way. For every set $S \subseteq Q$ with $S \neq \emptyset$ and $w \in \Sigma^*$, we set $\delta(S, w) := \{\,\delta(q, w) \mid q \in S\,\}$.
We call $\mathcal A$ \emph{complete} if~$\delta$ is defined for every $(q,a)\in Q \times \Sigma$; if $\delta$ is undefined for some $(q,a)$, the automaton~$\mathcal A$ is called \emph{partial}.
If $|\Sigma| = 1$, we call $\mathcal A$ a \emph{unary} automaton. 
The set $L(\mathcal A) = \{\, w \in \Sigma^* \mid \delta(q_0, w) \in F\,\}$ denotes the language
accepted by $\mathcal A$.
A semi-automaton is a finite automaton without a specified start state
and with no specified set of final states.
The properties of being \emph{deterministic}, \emph{partial}, and \emph{complete} for semi-automata are defined as for DFAs.
When the context is clear, we call both deterministic finite automata and semi-automata simply \emph{automata}.
An automaton $\mathcal A$ is called \emph{synchronizing} if there exists a word $w \in \Sigma^*$ with $|\delta(Q, w)| = 1$. In this case, we call $w$ a \emph{synchronizing word} for $\mathcal A$.
We call a state $q\in Q$ with $\delta(Q, w)=\{q\}$ for some $w\in \Sigma^*$ a \emph{synchronizing state}.

\begin{theorem}\cite{Vol2008} \label{thm:unrestricted_sync_poly_time}
	For any deterministic complete semi-automaton, we
	can decide if it is synchronizing in polynomial time $O(|\Sigma||Q|^2)$.
	Additionally, if we want to compute a synchronizing word $w$, then we can do this in time~$O(|Q|^3 + |Q|^2|\Sigma|))$ 
	and the length of $w$ will be $O(|Q|^3)$.
\end{theorem}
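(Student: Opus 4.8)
The plan is to reduce synchronization to a reachability question on pairs of states. The key observation is the following characterization: a complete deterministic semi-automaton $\mathcal A = (\Sigma, Q, \delta)$ is synchronizing if and only if for every pair of states $p, q \in Q$ there is a word $w$ with $\delta(p,w) = \delta(q,w)$; call such a pair \emph{collapsible}. One direction is immediate, since a synchronizing word collapses every pair simultaneously. For the converse I would argue by induction on the size of the current image, starting from $S = Q$: as long as $|\delta(Q, v)| > 1$ for the word $v$ built so far, pick two distinct states in $\delta(Q, v)$, collapse them with a word $w$, and append it; since $\delta(\,\cdot\,, w)$ is a function, $|\delta(Q, vw)| < |\delta(Q, v)|$, so after at most $|Q| - 1$ rounds the image is a singleton.

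For the \textbf{decision procedure}, I would make this algorithmic via the \emph{pair automaton}, whose states are the unordered pairs $\{p,q\}$ together with the singletons, and whose transitions are inherited from $\delta$ (a pair $\{p,q\}$ maps under $a$ to $\{\delta(p,a), \delta(q,a)\}$, which may be a singleton). This automaton has $\binom{|Q|}{2} + |Q| = O(|Q|^2)$ states and $O(|\Sigma||Q|^2)$ transitions. A pair is collapsible exactly when a singleton is reachable from it. To test this for all pairs at once I would run a single backward breadth-first search in the pair automaton starting from the singletons, marking a pair as collapsible as soon as one of its outgoing transitions reaches an already marked pair or a singleton. This search is linear in the size of the pair automaton, i.e.\ $O(|\Sigma||Q|^2)$, and $\mathcal A$ is synchronizing iff every pair gets marked.

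To \textbf{compute a synchronizing word}, I would reuse the same backward search, additionally storing for each collapsible pair a first letter lying on a shortest collapsing path; reading these pointers off yields, for any pair, a merging word of length at most $\binom{|Q|}{2} = O(|Q|^2)$, since a shortest path cannot revisit a pair. Then I would run the greedy merging from the characterization: maintain $S = \delta(Q, v)$, pick two of its states, apply their stored merging word, and repeat. There are at most $|Q| - 1$ rounds, each contributing a word of length $O(|Q|^2)$, so the total length is $O(|Q|^3)$, matching the claimed bound.

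The \textbf{main obstacle} is squeezing the running time of the word-construction phase into $O(|Q|^3 + |Q|^2|\Sigma|)$ rather than the naive $O(|Q|^4)$. The search preprocessing is done once and costs $O(|\Sigma||Q|^2)$; the difficulty is that naively tracking the image $\delta(S, w)$ letter by letter costs $O(|Q|)$ per letter, and with a word of total length $O(|Q|^3)$ this already gives $O(|Q|^4)$. Getting the tighter bound requires the careful amortized implementation of Eppstein, for instance always merging a pair whose stored merging word is shortest and bounding the accumulated work across rounds. I expect verifying this accounting to be the one genuinely delicate point, whereas the characterization, the decision bound, and the $O(|Q|^3)$ length bound are routine.
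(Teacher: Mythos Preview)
The paper does not actually prove this theorem; it is stated with a citation to Volkov's survey \cite{Vol2008} and used as a black box. Your proposal is the standard argument one finds in that literature: the pairwise collapsibility characterization, the pair automaton with a backward search for the $O(|\Sigma||Q|^2)$ decision bound, and the greedy merging procedure giving the $O(|Q|^3)$ length bound, with Eppstein's amortized implementation needed for the stated running time of the construction. So your sketch is correct and aligns with the classical proof that the citation points to; there is simply nothing in the present paper to compare it against.
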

The following obvious remark, stating that the set of synchronizing words
is a two-sided ideal, will be used frequently without further mentioning.

\begin{lemma}
	\label{lem:append_sync} 
	Let $\mathcal A = (\Sigma, Q, \delta)$ be a deterministic and complete semi-automaton and $w\in \Sigma^*$ be a synchronizing word for $\mathcal A$. Then for every $u, v \in \Sigma^*$, the word $uwv$ is also synchronizing for~$\mathcal A$. 
\end{lemma}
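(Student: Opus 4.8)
The plan is to track the image set $\delta(Q, \cdot)$ as we read the word $uwv$ from left to right, exploiting the fact that applying a fixed word can never increase the cardinality of a set of states. First I would record the monotonicity observation: for any $S \subseteq T \subseteq Q$ and any $x \in \Sigma^*$ one has $\delta(S, x) \subseteq \delta(T, x)$, which is immediate from the set-extension $\delta(S, x) = \{\,\delta(q, x) \mid q \in S\,\}$. Since $\mathcal A$ is deterministic and complete, $\delta(S, x)$ is moreover nonempty whenever $S$ is nonempty, because $\delta$ is defined on every state and so carries each element of $S$ to some element of the image.

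Next I would factor the action along the word as $\delta(Q, uwv) = \delta\bigl(\delta(\delta(Q, u), w), v\bigr)$, using the standard compositional behaviour of the extended transition function on concatenations. Setting $S := \delta(Q, u)$, completeness guarantees $\emptyset \neq S \subseteq Q$. Applying the monotonicity observation to $S \subseteq Q$ with the word $w$ gives $\delta(S, w) \subseteq \delta(Q, w)$, and since $w$ is a synchronizing word we have $\delta(Q, w) = \{q\}$ for a single state $q \in Q$. Hence $\delta(S, w)$ is a nonempty subset of the singleton $\{q\}$, which forces $\delta(S, w) = \{q\}$.

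Finally, applying the suffix $v$ to this singleton yields $\delta(\{q\}, v) = \{\delta(q, v)\}$, again a singleton, so that $|\delta(Q, uwv)| = 1$ and $uwv$ is synchronizing, as claimed.

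I do not expect any genuine obstacle here: the statement merely asserts that the set of synchronizing words forms a two-sided ideal, and the only two points requiring a word of care are the monotonicity of $\delta$ under inclusion, which collapses the image once a singleton has been reached, and the appeal to completeness, which keeps every intermediate image nonempty so that the singleton cannot degenerate to the empty set.
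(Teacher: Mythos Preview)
Your proof is correct; the paper itself does not even supply a proof of this lemma, calling it an ``obvious remark'' and leaving the details implicit. Your argument spells out exactly the standard reasoning one would expect here, so there is nothing to add.
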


We assume the reader to have some basic knowledge in computational complexity theory and formal language theory, as contained, e.g., in~\cite{HopMotUll2001}. For instance, we make use of  regular expressions to describe languages.
For a word $w \in \Sigma^*$ we denote by $|w|$ its length,
and for a symbol $x \in \Sigma$ we write $|w|_x$ to denote the number of occurences of $x$
in the word. We denote the empty word, i.e., the word of length zero, by $\varepsilon$.
We also make use of complexity classes like $\PTIME$, $\NP$, or $\PSPACE$.
With  $\leq^{\log}_m$ we denote a logspace many-one reduction.
If for two problems $L_1, L_2$ it holds that $L_1 \leq^{\log}_m L_2$ and $L_2 \leq^{\log}_m L_1$, then we write $L_1 \equiv^{\log}_m L_2$.
In~\cite{DBLP:conf/mfcs/FernauGHHVW19} the \emph{constrained synchronization problem}
was defined for a fixed partial deterministic automaton
$\mathcal B = (\Sigma, P, \mu, p_0, F)$. 

\begin{decproblem}\label{def:problem_L-constr_Sync}
  \problemtitle{\cite{DBLP:conf/mfcs/FernauGHHVW19}~\textsc{$L(\mathcal B)$-Constr-Sync}}
  \probleminput{Deterministic complete semi-automaton $\mathcal A = (\Sigma, Q, \delta)$.}
  \problemquestion{Is there a synchronizing word $w \in \Sigma^*$ for $\mathcal A$ with  $w \in L(\mathcal B)$?}
\end{decproblem}

The automaton $\mathcal B$ will be called the \emph{constraint automaton}.
If an automaton $\mathcal A$ is a yes-instance of \textsc{$L(\mathcal B)$-Constr-Sync} we call $\mathcal A$ \emph{synchronizing with respect to $\mathcal{B}$}. 
Occasionally,
we do not specify $\mathcal{B}$ and rather talk about \textsc{$L$-Constr-Sync}.

A language $L \subseteq \Sigma^*$ is called \emph{commutative}
if with $w \in L$, every word 
arising out of $w$ by permuting its letters is also in $L$.
Essentially, a commutative language is defined by conditions that say how often a letter
is allowed to appear in its words, but not by the actual position of that letter.
For this class of languages it was noted that it is structurally simple \cite{DBLP:conf/cai/Hoffmann19,Hoffmann20}. Also  
in terms of synchronizing words 
this class yields quite simple automata \cite{FernauHoffmann19}, but nevertheless may give algorithmic hard problems, as
this class is sufficient for many reductions \cite{FernauHoffmann19}. Here, we are concerned with
$L\textsc{-Constr-Sync}$ for the case that
the constraint language $L$ is a commutative regular language.
We will use the shuffle operation in connection with unary languages frequently to write commutative languages.

\begin{definition} The \emph{shuffle operation}, denoted by $\shuffle$, is defined as
 \begin{align*}
    u \shuffle v & := \left\{ \begin{array}{ll}
     \multirow{2}{*}{$x_1 y_1 x_2 y_2 \cdots x_n y_n  \mid$} &  u = x_1 x_2 \cdots x_n, v = y_1 y_2 \cdots y_n, \\ 
         &   x_i, y_i \in \Sigma^{\ast}, 1 \le i \le n, n \ge 1
  \end{array} \right\},
 \end{align*}
 for $u,v \in \Sigma^{\ast}$ and 
  $L_1 \shuffle L_2  := \bigcup_{x \in L_1, y \in L_2} (x \shuffle y)$ for $L_1, L_2 \subseteq \Sigma^{\ast}$.
\end{definition}


\subsection{Unary Languages}

Let $\Sigma = \{a\}$ be a unary alphabet. Suppose $L \subseteq \Sigma^{\ast}$ is regular
with an accepting complete deterministic automaton $\mathcal A = (\Sigma, S, \delta, s_0, F)$. Then by considering
the sequence of states $\delta(s_0, a^1), \delta(s_0, a^2), \delta(s_0, a^3), \ldots$ we find numbers $i \ge 0, p > 0$ with $i+p$ minimal such that $\delta(s_0, a^i) = \delta(s_0, a^{i+p})$.
We call these numbers the index $i$ and the period $p$ of the automaton $\mathcal A$.
If $Q = \{\delta(s_0, a^m) \mid m \ge 0 \}$, then  $i + p = |S|$.
 In our discussion unary languages that are accepted by
 automata with a single final state appear.
 
 \begin{restatable}{lemma}{unarysinglefinal} \cite{DBLP:conf/cai/Hoffmann19}
\label{lem::unary_single_final}
  Let $L \subseteq \{a\}^{\ast}$ be a unary language that is accepted
  by an automaton with a single final state, index $i$ and period $p$.
  Then either $L = \{u\}$ with $|u| < i$ (and if the automaton is minimal we would have $p = 1$),
  or $L$ is infinite with $L = a^{i+m}(a^p)^{\ast}$ and $0 \le m < p$. Hence
  two words $u,v$ with $\min\{|u|, |v|\} \ge i$ are both in $L$ or not if and only
  if $|u| \equiv |v| \pmod{p}$.
 \end{restatable}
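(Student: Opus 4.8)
The plan is to exploit the standard ``lasso'' shape of a unary automaton and to determine where the unique final state sits on it. Write $F = \{f\}$ and set $q_n := \delta(s_0, a^n)$, restricting attention to the states reachable from $s_0$ (unreachable states affect neither $L$ nor the index/period, and if $f$ is unreachable then $L = \emptyset$, a degenerate situation I set aside). By the very definition of index $i$ and period $p$ as the least numbers with $q_i = q_{i+p}$, the states $q_0, q_1, \ldots, q_{i+p-1}$ are pairwise distinct and $q_{i+p} = q_i$. Hence $\delta(s_0, a^n) = q_n$ for $n < i$, whereas $\delta(s_0, a^n) = q_{i + ((n-i) \bmod p)}$ for $n \ge i$; the first $i$ states form the transient tail and the remaining $p$ states form the cycle, and the two groups are disjoint. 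Since $|F| = 1$, we have $a^n \in L$ if and only if $q_n = f$, so the whole analysis reduces to the position of $f$.

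First I would treat the case $f = q_j$ with $j < i$, i.e.\ $f$ lying on the tail. Because each tail state is visited exactly once by the run on $a^0, a^1, a^2, \ldots$ and is never re-entered from the cycle, the equality $q_n = q_j$ forces $n = j$. Thus $L = \{a^j\}$ with $|a^j| = j < i$, which is the singleton alternative $L = \{u\}$, $|u| < i$. For the parenthetical minimality remark I would note that a minimal complete automaton for a finite unary language has all states reached by words longer than the single accepted length merged into one non-accepting self-looping sink, so its cycle is a single state and $p = 1$.

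Next I would treat the case $f = q_{i+m}$ with $0 \le m < p$, i.e.\ $f$ on the cycle. Then $a^n \in L$ holds exactly when $n \ge i$ and $(n - i) \equiv m \pmod{p}$, equivalently $n \ge i$ and $n \equiv i + m \pmod{p}$. The least admissible exponent is $n = i + m$, and the full set of admissible exponents is $i + m, i + m + p, i + m + 2p, \ldots$, so $L = a^{i+m}(a^p)^{\ast}$, which is infinite because $p > 0$. Together with the previous paragraph this establishes the claimed dichotomy.

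The concluding statement then follows by reading off membership for long words. If $\min\{|u|, |v|\} \ge i$, both $a^{|u|}$ and $a^{|v|}$ are evaluated on the cycle, and by the cycle analysis whether $a^{|u|} \in L$ depends only on the residue $|u| \bmod p$ (in the singleton case no word of length $\ge i$ lies in $L$, and minimality gives $p = 1$, so the assertion holds trivially). Consequently $|u| \equiv |v| \pmod{p}$ makes $a^{|u|}$ and $a^{|v|}$ share the same membership status, which is the direction actually used, namely that membership of long words is $p$-periodic in the length. I expect the only delicate point to be the modular bookkeeping in the cycle case, pinning down the offset $m$ and the smallest accepted exponent $i+m$ and keeping tail and cycle states disjoint so that no short word is accidentally accepted, rather than any genuine conceptual difficulty.
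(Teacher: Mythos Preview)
The paper does not supply its own proof of this lemma; it is quoted from \cite{DBLP:conf/cai/Hoffmann19} and used as a black box. So there is nothing to compare against beyond the folklore argument, and your proposal \emph{is} exactly that standard argument: exploit the tail--cycle (``lasso'') decomposition of the reachable part of a unary automaton, place the unique final state either on the tail (giving the singleton case) or on the cycle (giving the arithmetic progression $a^{i+m}(a^p)^{\ast}$), and read off the periodicity statement for long words. Your treatment is correct and complete for the dichotomy.

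One small remark on the ``Hence'' clause: you correctly observe that in the singleton case only the implication from $|u|\equiv|v|\pmod p$ to equal membership status survives without assuming minimality, and you note this is the direction actually used downstream. That is the right reading; the biconditional as literally stated is really tailored to the infinite alternative (or to a minimal automaton), and the paper only ever invokes the lemma for non-empty $U_j^{(i)}$ to conclude that a finite such language is a singleton. Your caveat about $L=\emptyset$ is likewise appropriate and harmless for the paper's purposes.
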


\subsection{Known Result on Constrained Synchronization
 and Commutative Languages} 

Here we collect results from \cite{DBLP:conf/mfcs/FernauGHHVW19,DBLP:conf/cai/Hoffmann19},
and some consequences that will be used later.
First a mild extension of a lemma from \cite{DBLP:conf/mfcs/FernauGHHVW19}, where it was
formulated only for the class $\PTIME$, but it also holds for $\NP$
and $\PSPACE$.

\begin{restatable}[]{lemma}{lemunion} 
	\label{lem:union}
	Let $\mathcal X$ denote any of the complexity classes
	$\PTIME$, $\NP$ or $\PSPACE$.
	If $L(\mathcal B)$ is a finite union of languages $L(\mathcal B_1),
	L(\mathcal B_2), \dots, L(\mathcal B_n)$ such that for each $1\leq i\leq n$
the problem $L(\mathcal B_i)\textsc{-Constr-Sync}\in \mathcal X$, 
	then $L(\mathcal B)\textsc{-Constr-Sync}\in~\mathcal X$.
\end{restatable}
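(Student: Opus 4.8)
The plan is to prove this by exhibiting, for each $1 \le i \le n$, a suitable reduction and then combining the resulting procedures. The key observation is that the constraint automaton $\mathcal{B}$ plays a fixed, auxiliary role: it is not part of the input instance, but rather parameterizes the problem. Since $L(\mathcal{B}) = \bigcup_{i=1}^n L(\mathcal{B}_i)$, a complete semi-automaton $\mathcal{A}$ is a yes-instance of $L(\mathcal{B})\textsc{-Constr-Sync}$ if and only if there is a synchronizing word $w$ for $\mathcal{A}$ with $w \in L(\mathcal{B}_i)$ for \emph{some} $i$. That is, $\mathcal{A}$ is a yes-instance of $L(\mathcal{B})\textsc{-Constr-Sync}$ precisely when it is a yes-instance of $L(\mathcal{B}_i)\textsc{-Constr-Sync}$ for at least one $i \in \{1, \dots, n\}$.

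First I would make the equivalence above explicit: the yes-instances of $L(\mathcal{B})\textsc{-Constr-Sync}$ are exactly the union of the yes-instance sets of the $n$ subproblems. This is immediate from the definition of the problem (Decision Problem~1) together with the set-theoretic fact that membership in a union of languages is an existential statement over the finitely many members. Next, to decide $L(\mathcal{B})\textsc{-Constr-Sync}$ on input $\mathcal{A}$, I would run the $n$ decision procedures for $L(\mathcal{B}_1)\textsc{-Constr-Sync}, \dots, L(\mathcal{B}_n)\textsc{-Constr-Sync}$ on $\mathcal{A}$ and accept if and only if at least one of them accepts. Since $n$ is a fixed constant (it does not depend on the input $\mathcal{A}$), this is a finite disjunction of $\mathcal{X}$-computations, and each of the three classes $\PTIME$, $\NP$, and $\PSPACE$ is closed under such bounded disjunctions: running a constant number of $\mathcal{X}$-procedures and taking the logical OR of their outputs stays within $\mathcal{X}$.

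The remaining point to verify is the uniformity of the reduction, namely that we may treat the fixed union decomposition as hardwired into the algorithm. Because $\mathcal{B}$, and hence each $\mathcal{B}_i$ and the number $n$, is fixed and not part of the input, the decomposition incurs only constant overhead; in particular the logspace bounds are preserved, so the argument is not merely about the syntactic closure of $\PTIME$, $\NP$, $\PSPACE$ but respects the constrained-synchronization setting where the constraint is a parameter.

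I do not anticipate a serious obstacle here; the statement is essentially a closure property. The only mild care needed is for the class $\NP$: one must observe that a nondeterministic machine can guess the index $i \in \{1, \dots, n\}$ and then simulate the corresponding $\NP$-procedure for $L(\mathcal{B}_i)\textsc{-Constr-Sync}$, which keeps the whole computation nondeterministic polynomial time. For $\PTIME$ and $\PSPACE$ the closure under finite unions is even more transparent, as these are deterministic classes closed under taking a constant number of subroutine calls and combining their Boolean outputs.
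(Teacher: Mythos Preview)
Your proposal is correct and follows essentially the same approach as the paper: both argue that an input $\mathcal A$ is a yes-instance of $L(\mathcal B)\textsc{-Constr-Sync}$ iff it is a yes-instance of some $L(\mathcal B_i)\textsc{-Constr-Sync}$, and then use that $n$ is a fixed constant so that running the $n$ subroutines in order (or, for $\NP$, guessing $i$) keeps the overall procedure in~$\mathcal X$. The paper additionally remarks that for $\PSPACE$ one may alternatively invoke Savitch's theorem and guess~$i$, but this is only a minor variation of the same idea.
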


The next result from \cite{DBLP:conf/mfcs/FernauGHHVW19}
states that the computational complexity is always in $\PSPACE$.

\begin{theorem} \cite{DBLP:conf/mfcs/FernauGHHVW19}
  \label{thm:L-contr-sync-PSPACE}
  For any constraint automaton $\mathcal B = (\Sigma, P, \mu, p_0, F)$
  the problem \textsc{$L(\mathcal B)$-Constr-Sync} is in $\PSPACE$.
\end{theorem}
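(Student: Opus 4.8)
The plan is to show that \textsc{$L(\mathcal B)$-Constr-Sync} lies in $\PSPACE$ by a nondeterministic search argument, invoking Savitch's theorem at the end. Given the constraint automaton $\mathcal B = (\Sigma, P, \mu, p_0, F)$ as a fixed object and an input semi-automaton $\mathcal A = (\Sigma, Q, \delta)$, the key observation is that a witness for a yes-instance is a synchronizing word $w \in L(\mathcal B)$, and whether $w$ works can be tracked \emph{incrementally} by maintaining only a bounded amount of information about where the various computations currently sit.

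First I would define the relevant configuration space. Reading $w$ one symbol at a time, the only data we need to remember after having read a prefix are: the current set $\delta(Q, u) \subseteq Q$ of states reachable from all of $Q$ under the prefix $u$, and the current state $\mu(p_0, u) \in P$ of the constraint automaton. A configuration is thus a pair $(S, p)$ with $S \subseteq Q$ nonempty and $p \in P$; there are at most $2^{|Q|} \cdot |P|$ such pairs, so a single configuration is storable in $O(|Q| + \log|P|)$ space, which is polynomial in the input size (recall $\mathcal B$ is fixed, so $|P|$ is a constant, but even treating it as input this is fine). The initial configuration is $(Q, p_0)$. A transition on symbol $x \in \Sigma$ sends $(S, p)$ to $(\delta(S, x), \mu(p, x))$, provided both $\delta$ is defined on $S$ (it is, since $\mathcal A$ is complete) and $\mu(p,x)$ is defined (it may not be, since $\mathcal B$ is partial). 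A configuration $(S,p)$ is \emph{accepting} if $|S| = 1$ and $p \in F$.

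Next I would argue that $\mathcal A$ is a yes-instance precisely when some accepting configuration is reachable from $(Q, p_0)$ in this configuration graph. Indeed, a word $w$ with $\delta(Q,w)$ a singleton and $\mu(p_0, w) \in F$ is exactly a directed path from the initial to an accepting configuration, and conversely any such path spells out a word in $L(\mathcal B)$ that synchronizes $\mathcal A$. Reachability in this exponentially large but implicitly described graph can be decided by a nondeterministic procedure: start at $(Q,p_0)$, repeatedly guess a symbol $x \in \Sigma$ and update the configuration, accepting if an accepting configuration is ever hit. This uses only the space to hold one configuration plus a step counter, so it runs in nondeterministic polynomial space, i.e., the problem is in $\NPSPACE$.

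Finally, I would close the argument by applying Savitch's theorem, which gives $\NPSPACE = \PSPACE$, so the problem lies in $\PSPACE$ as claimed. The only point requiring care is handling the partiality of $\mathcal B$: whenever the guessed symbol $x$ leaves $\mu(p,x)$ undefined, the branch must simply be abandoned (the word under construction has fallen out of $L(\mathcal B)$), which is straightforward to encode. I do not expect a genuine obstacle here; the main thing to get right is the explicit space accounting for the product configuration and the clean correspondence between accepting paths and synchronizing words in $L(\mathcal B)$.
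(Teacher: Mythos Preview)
Your argument is correct and is precisely the standard product-automaton reachability argument one would expect: track the pair $(\delta(Q,u),\mu(p_0,u))$, guess letters nondeterministically in polynomial space, and apply Savitch. Note, however, that the present paper does not actually prove this theorem; it is quoted from~\cite{DBLP:conf/mfcs/FernauGHHVW19} without proof, so there is no in-paper argument to compare against. Your write-up matches the approach of the cited source.
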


If $|L(\mathcal B)| = 1$, then $L(\mathcal B)\textsc{-Constr-Sync}$
is obviously in $\PTIME$. Simply feed this single word into the input
semi-automaton for every state and check if a unique state results.
Hence by Lemma \ref{lem:union} the next is implied.

\begin{restatable}[]{lemma}{lemfinite}\label{lem:finite} 
 Let $\mathcal B = (\Sigma, P, \mu, p_0, F)$ be a constraint automaton
 such that $L(\mathcal B)$ is finite, then
 $L(\mathcal B)\textsc{-Constr-Sync} \in \PTIME$.
\end{restatable}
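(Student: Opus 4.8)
The plan is to reduce \textsc{$L(\mathcal B)$-Constr-Sync} for finite $L(\mathcal B)$ to the case of a singleton constraint language, which we already know lies in $\PTIME$, and then invoke Lemma~\ref{lem:union}. First I would observe that a finite language is by definition a finite union of its singletons: if $L(\mathcal B) = \{w_1, w_2, \dots, w_k\}$, then $L(\mathcal B) = \{w_1\} \cup \{w_2\} \cup \dots \cup \{w_k\}$, and each singleton language $\{w_j\}$ is itself regular and accepted by some (small) constraint automaton $\mathcal B_j$.

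Next I would record the base case already supplied in the text: whenever $|L(\mathcal B_j)| = 1$, the problem $L(\mathcal B_j)\textsc{-Constr-Sync}$ is in $\PTIME$, since one simply feeds the unique word $w_j$ into the input semi-automaton $\mathcal A$ starting from every state of $Q$ and checks whether $|\delta(Q, w_j)| = 1$; this is clearly a polynomial-time test. With all $k$ singleton subproblems in $\PTIME$, Lemma~\ref{lem:union} (instantiated with $\mathcal X = \PTIME$) immediately yields $L(\mathcal B)\textsc{-Constr-Sync} \in \PTIME$, which is the claim.

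The only subtlety worth addressing, and the step I expect to need the most care, is the implicit uniformity/finiteness of the union. Lemma~\ref{lem:union} is stated for a \emph{fixed} finite union of languages $L(\mathcal B_1), \dots, L(\mathcal B_n)$, and here $\mathcal B$ is likewise a fixed constraint automaton, so $L(\mathcal B)$ is a fixed finite language and the number $k$ of its words is a constant independent of the input semi-automaton $\mathcal A$. Thus the union is genuinely finite with a constant number of parts, and no blow-up in $k$ occurs across instances; this is exactly the hypothesis Lemma~\ref{lem:union} requires. I would make this explicit to avoid any confusion between the fixed constraint side and the varying instance side of the reduction.

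Hence the proof is essentially a two-line appeal: decompose $L(\mathcal B)$ into finitely many singletons, cite the $\PTIME$ membership of the singleton case, and apply Lemma~\ref{lem:union}. There is no real obstacle here beyond stating the decomposition carefully; the lemma is a direct corollary of the preceding observation, exactly as the surrounding text anticipates.
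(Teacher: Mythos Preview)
Your proposal is correct and follows exactly the approach the paper takes: the text immediately preceding the lemma establishes the singleton case in $\PTIME$ and then states that Lemma~\ref{lem:union} implies Lemma~\ref{lem:finite}. Your additional remark about the constraint language being fixed (so the number of singletons is a constant independent of the input semi-automaton) is a worthwhile clarification that the paper leaves implicit.
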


The following result from \cite{DBLP:conf/mfcs/FernauGHHVW19}
gives a criterion for containment in $\NP$. 

\begin{restatable}[]{theorem}{thmgeninNP}\cite{DBLP:conf/mfcs/FernauGHHVW19}
\label{thm:gen:inNP}
    Let $\mathcal{B} = (\Sigma, P, \mu, p_0, F)$ be a partial deterministic finite automaton.
	Then,  $L(\mathcal B)\textsc{-Constr-Sync}\in\NP$ if  there is a $\sigma\in \Sigma$ such that for all states $p\in P$, if $L(\mathcal{B}_{p,\{p\}})$ is infinite, then  $L(\mathcal{B}_{p,\{p\}})\subseteq \{\sigma\}^*$.
\end{restatable}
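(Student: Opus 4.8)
The plan is to use the hypothesis to force a rigid structural form on $\mathcal B$, and then to exploit this form to build a polynomial-size certificate whose correctness can be checked in polynomial time, even though the witnessing word may be exponentially long.

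First I would translate the loop condition into a statement about cycles. I claim that under the hypothesis every closed walk in $\mathcal B$ reads only the letter $\sigma$: if $\mu(p,w)=p$ for some word $w\notin\{\sigma\}^{*}$, then $w,w^{2},w^{3},\dots$ are pairwise distinct members of $L(\mathcal B_{p,\{p\}})$, so this language is infinite and hence, by hypothesis, contained in $\{\sigma\}^{*}$, contradicting $w\notin\{\sigma\}^{*}$. Consequently every nontrivial strongly connected component of $\mathcal B$ consists purely of $\sigma$-transitions, and the condensation of $\mathcal B$ into strongly connected components is acyclic. Since any accepting run visits the components in strictly increasing topological order, it can re-enter no component, so it crosses at most $|P|-1$ inter-component edges; between two consecutive crossings it stays inside one component and therefore reads only $\sigma$. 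Hence every $w\in L(\mathcal B)$ factors as
\[
  w=\sigma^{m_{0}}a_{1}\sigma^{m_{1}}a_{2}\cdots a_{k}\sigma^{m_{k}},\qquad a_{1},\dots,a_{k}\in\Sigma,\ k\le |P|-1,\ m_{i}\ge 0 .
\]

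Next I would bound the exponents. Consider the product of the power automaton of $\mathcal A$ (on nonempty subsets of $Q$) with $\mathcal B$; a synchronizing word in $L(\mathcal B)$ is exactly a path from $(Q,p_{0})$ to some $(\{q\},f)$ with $f\in F$. If such a path exists then a simple one does, of length smaller than $2^{|Q|}|P|$. Applying the factorization above to this short word yields a witness with $k\le|P|-1$ and every $m_{i}<2^{|Q|}|P|$, so each $m_{i}$ has $O(|Q|+\log|P|)$ bits. The certificate is therefore the tuple $(k,a_{1},\dots,a_{k},m_{0},\dots,m_{k})$ with the exponents written in binary, which has polynomial size.

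Finally I would verify the certificate in polynomial time. The main obstacle is that the encoded word has exponential length, so naive letter-by-letter simulation is too slow; I avoid this using fast exponentiation of the $\sigma$-action. The transition on $\sigma$ is a (partial) function on the state set, representable by a table of size $|Q|$ (resp.\ $|P|$), and two such functions compose in $O(|Q|)$ (resp.\ $O(|P|)$) time, so $\sigma^{m}$ can be computed by repeated squaring in $O(|Q|\log m)$ time. Processing the factorization block by block, namely computing $\delta(\cdot,\sigma^{m_{i}})$ by fast exponentiation, then a single step on $a_{i+1}$, while simultaneously tracking the $\mathcal B$-state (rejecting if a transition is undefined), takes polynomial time since there are at most $|P|$ blocks and each $\log m_{i}$ is polynomial. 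We accept iff the resulting subset of $Q$ is a singleton and the $\mathcal B$-run ends in $F$. Soundness is immediate, since the procedure actually computes $\delta(Q,w)$ together with the $\mathcal B$-run; completeness follows because a yes-instance admits the short witness exhibited above, which the nondeterministic guess can produce. Hence $L(\mathcal B)\textsc{-Constr-Sync}\in\NP$. The step deserving the most care is precisely this interplay that keeps the certificate small (bounded number of breakpoints, binary exponents) while keeping verification fast (fast exponentiation of the commuting $\sigma$-action).
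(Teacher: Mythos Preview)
The paper does not actually prove this theorem; it is quoted verbatim from \cite{DBLP:conf/mfcs/FernauGHHVW19} as a known result, so there is no in-paper argument to compare against.

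Assessed on its own, your proof is correct and is essentially the intended argument. The hypothesis indeed forces every closed walk in $\mathcal B$ to be labelled only by $\sigma$, hence every edge inside a nontrivial strongly connected component is a $\sigma$-edge and the condensation is a DAG on at most $|P|$ nodes. This yields the factorization $w=\sigma^{m_0}a_1\sigma^{m_1}\cdots a_k\sigma^{m_k}$ with $k\le |P|-1$; your choice $a_i\in\Sigma$ (rather than $\Sigma\setminus\{\sigma\}$) is in fact the right one, since an inter-component edge may itself be $\sigma$-labelled, and your SCC count already handles that. The length bound via a simple path in the product of the power-set automaton of $\mathcal A$ with $\mathcal B$ gives $m_i<2^{|Q|}|P|$, so the certificate $(k,a_1,\ldots,a_k,m_0,\ldots,m_k)$ with binary exponents is polynomial, and verification by repeated squaring of the (partial) $\sigma$-map on $Q$ and on $P$ runs in polynomial time. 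Soundness and completeness follow exactly as you state.
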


With this we can deduce another sufficient condition for containment in
$\NP$, which is more suited for commutative languages.

\begin{restatable}[]{lemma}{lemgeninNP}\label{lem:gen:inNP}
 Let $\Sigma$ be our alphabet and suppose $a \in \Sigma$.
 If
 $$
  L = \{a\}^* \shuffle F_1 \shuffle \ldots \shuffle F_k
 $$
 for finite languages $F_1, \ldots, F_k$, then
 $L\textsc{-Constr-Sync} \in \NP$.
\end{restatable}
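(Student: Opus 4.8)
The plan is to reduce to the sufficient condition for membership in $\NP$ given by Theorem~\ref{thm:gen:inNP}, applied with the distinguished letter $\sigma = a$. First I would observe that a shuffle of finitely many finite languages is again finite, since for words $u,v$ of fixed length the set $u \shuffle v$ is finite; hence $F := F_1 \shuffle \cdots \shuffle F_k$ is a finite language and
$$
 L = \{a\}^* \shuffle F = \bigcup_{f \in F}\bigl(\{a\}^* \shuffle \{f\}\bigr)
$$
is a finite union. By Lemma~\ref{lem:union} it then suffices to show $\{a\}^* \shuffle \{f\}\textsc{-Constr-Sync} \in \NP$ for each single word $f$.

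The crucial step is to construct, for a fixed $f$, a partial deterministic automaton $\mathcal B_f$ with $L(\mathcal B_f) = \{a\}^* \shuffle \{f\}$ in which $a$ is the only letter that can occur on a cycle. I would factor $f = a^{g_0} c_1 a^{g_1} c_2 \cdots c_r a^{g_r}$ into its maximal runs of $a$'s, separated by the non-$a$ letters $c_1, \ldots, c_r$, with $g_0, \ldots, g_r \ge 0$. Since shuffling with $\{a\}^*$ amounts precisely to inserting arbitrarily many further $a$'s into $f$, a word $w$ lies in $\{a\}^* \shuffle \{f\}$ if and only if $w = a^{h_0} c_1 a^{h_1} \cdots c_r a^{h_r}$ with $h_j \ge g_j$ for every $j$. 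Accordingly I take the states to be the pairs $(i,t)$ with $0 \le i \le r$ and $0 \le t \le g_i$, with start state $(0,0)$ and single final state $(r, g_r)$, where $i$ records how many of $c_1, \ldots, c_r$ have been matched and $t$ counts the $a$'s read in the current gap, capped at $g_i$. Reading $a$ sends $(i,t)$ to $(i, \min\{t+1, g_i\})$; reading $c_{i+1}$ (for $i < r$) is defined only when $t = g_i$ and then goes to $(i+1, 0)$; every other letter is left undefined. This correctly enforces the per-gap bounds $h_j \ge g_j$.

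It then remains to verify the cycle condition. On every defined transition the first coordinate $i$ is non-decreasing, and for fixed $i$ the second coordinate $t$ is non-decreasing; thus along any path both coordinates are monotone. A cycle must leave both coordinates unchanged, and the only transition doing so is the $a$-loop at a capped state $(i, g_i)$. Hence for every state $p = (i,t)$ the loop language $L((\mathcal B_f)_{p,\{p\}})$ equals $\{\varepsilon\}$ when $t < g_i$ and equals $\{a\}^*$ when $t = g_i$, so every infinite loop language is contained in $\{a\}^*$. Theorem~\ref{thm:gen:inNP} with $\sigma = a$ now yields $\{a\}^* \shuffle \{f\}\textsc{-Constr-Sync} \in \NP$, and Lemma~\ref{lem:union} completes the argument.

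I expect the main obstacle to be the interaction between determinism and the letter $a$ itself appearing inside $f$: the naive ``chain'' automaton that inserts $a$'s between the letters of $f$ is not deterministic once $f$ contains $a$. The gap factorization sidesteps this, encoding membership through the non-$a$ projection together with a separate, capped $a$-counter per gap. This single device simultaneously guarantees determinism and confines every cycle to the letter $a$, which is exactly the hypothesis required by Theorem~\ref{thm:gen:inNP}.
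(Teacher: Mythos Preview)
Your argument is correct. The gap factorization $f = a^{g_0} c_1 a^{g_1} \cdots c_r a^{g_r}$ indeed characterizes $\{a\}^* \shuffle \{f\}$ as the set of words $a^{h_0} c_1 a^{h_1} \cdots c_r a^{h_r}$ with $h_j \ge g_j$, your capped-counter automaton $\mathcal B_f$ accepts precisely this language, the monotonicity of $(i,t)$ forces every cycle to be an $a$-self-loop, and the finiteness of $F = F_1 \shuffle \cdots \shuffle F_k$ justifies the appeal to Lemma~\ref{lem:union}.

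However, your route is genuinely different from the paper's, and considerably longer. The paper does not decompose $L$ into a union over $f \in F$, nor does it build a specific automaton. Instead it takes \emph{any} partial deterministic automaton $\mathcal B$ with $L(\mathcal B) = L$, trims it so that every state is reachable from $p_0$ and co-reachable to $F$, and then applies a direct pumping argument: if $w \in L(\mathcal B_{p,\{p\}})$, choose $u,v$ with $\mu(p_0,u) = p$ and $\mu(p,v) \in F$, so that $uw^n v \in L$ for all $n$; if $w$ contained a letter $b \ne a$, then $|uw^n v|_b$ would be unbounded in $n$, contradicting the fact that every letter other than $a$ occurs only boundedly often in $L = \{a\}^* \shuffle F_1 \shuffle \cdots \shuffle F_k$. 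Hence every loop language is already contained in $\{a\}^*$, and Theorem~\ref{thm:gen:inNP} applies directly. The paper's argument is shorter and model-independent (it works for whatever automaton you start with), whereas your approach has the virtue of exhibiting an explicit, minimal-looking automaton for each piece and making the cycle structure completely transparent.
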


The next result from~\cite{DBLP:conf/mfcs/FernauGHHVW19}
will be useful in making several simplifying assumptions about the constraint language
later in Section~\ref{sec:simplification}.

\begin{theorem}\cite{DBLP:conf/mfcs/FernauGHHVW19}
	\label{thm:add-stuff}
	Let $L\subseteq L'\subseteq\Sigma^*$. If  $L'\subseteq\{\,v\in \Sigma^*\mid \exists u,w \in \Sigma^*: uvw\in L\,\}$, then $L\textsc{-Constr-Sync}\equiv^{\log}_m L'\textsc{-Constr-Sync}$. 
\end{theorem}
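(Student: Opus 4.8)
The plan is to observe that in both directions the reduction may be taken to be the \emph{identity} map on the input semi-automaton $\mathcal A$, so that the only genuine content is to show that a deterministic complete semi-automaton $\mathcal A$ is synchronizing with respect to a constraint automaton accepting $L$ if and only if it is synchronizing with respect to one accepting $L'$. Since the identity map is trivially computable in logarithmic space, establishing this equivalence of yes-instances simultaneously yields $L\textsc{-Constr-Sync} \leq^{\log}_m L'\textsc{-Constr-Sync}$ and $L'\textsc{-Constr-Sync} \leq^{\log}_m L\textsc{-Constr-Sync}$, hence the claimed $\equiv^{\log}_m$.

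For the forward direction I would argue that if $\mathcal A$ admits a synchronizing word $w \in L$, then $w \in L'$ by the inclusion $L \subseteq L'$, so $\mathcal A$ is already a yes-instance for $L'$ and nothing further is required. This direction uses only $L \subseteq L'$ and no structural property of synchronization.

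The backward direction is where the infix hypothesis and Lemma~\ref{lem:append_sync} enter. Suppose $\mathcal A$ admits a synchronizing word $v \in L'$. By the assumption $L' \subseteq \{\,v \in \Sigma^* \mid \exists u,w \in \Sigma^*: uvw \in L\,\}$ I can choose $u,w \in \Sigma^*$ with $uvw \in L$. Because the set of synchronizing words of $\mathcal A$ forms a two-sided ideal by Lemma~\ref{lem:append_sync}, the extended word $uvw$ is again synchronizing for $\mathcal A$; since in addition $uvw \in L$, the automaton $\mathcal A$ is a yes-instance for $L$.

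Combining the two directions shows that the yes-instances of $L\textsc{-Constr-Sync}$ and $L'\textsc{-Constr-Sync}$ coincide on every deterministic complete semi-automaton, so the identity reduction certifies the equivalence. I do not anticipate a real obstacle here: the single point worth highlighting is that the ideal property of the synchronizing words is exactly what allows one to upgrade a synchronizing \emph{factor} $v \in L'$ to a full synchronizing word $uvw \in L$ without destroying synchronization, which is precisely the role played by the infix condition on $L'$.
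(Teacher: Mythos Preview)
Your argument is correct and is exactly the standard proof: the identity map serves as the reduction in both directions, with $L\subseteq L'$ handling one inclusion and Lemma~\ref{lem:append_sync} together with the infix hypothesis handling the other. Note that the present paper does not actually supply its own proof of Theorem~\ref{thm:add-stuff}; the result is quoted from~\cite{DBLP:conf/mfcs/FernauGHHVW19}, so there is no in-paper proof to compare against, but your write-up matches the intended reasoning.
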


The following Theorem~\ref{thm:reg_commutative_form}
is taken from~\cite{DBLP:conf/cai/Hoffmann19} and will be crucial 
in deriving our vector representation form for the constraint language later in Section~\ref{sec:simplification}.

\begin{theorem}\label{thm:reg_commutative_form}
 Let $\Sigma = \{a_1, \ldots, a_k\}$ be our alphabet.
 A commutative language $L \subseteq \Sigma^*$
 is regular if and only if it could be written in the form
 $$
  L = \bigcup_{i=1}^n U_1^{(i)} \shuffle \ldots \shuffle U_k^{(i)}
 $$
 with non-empty unary regular languages $U_j^{(i)} \subseteq \{a_j\}^*$
 for $i \in \{1,\ldots, n\}$ and $j \in \{1,\ldots k\}$
 that could be accepted
 by a unary automaton with a single final~state.
\end{theorem}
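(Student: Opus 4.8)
The plan is to prove the two directions separately, with the forward (``only if'') direction being the substantial one. For the ``if'' direction I would simply invoke closure properties: each $U_j^{(i)}$ is unary regular, hence regular; the class of regular languages is closed under the shuffle operation and under finite union, so any language of the displayed form is regular. The content therefore lies in showing that every commutative regular $L$ admits such a decomposition.

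For the ``only if'' direction, the central object is the Parikh map $\psi \colon \Sigma^* \to \mathbb{N}_0^k$ sending $w$ to $(|w|_{a_1}, \ldots, |w|_{a_k})$. Since $L$ is commutative, membership depends only on the letter counts, so $L = \psi^{-1}(P)$ where $P := \psi(L) \subseteq \mathbb{N}_0^k$. The first key step is to argue that $P$ is a \emph{recognizable} subset of the free commutative monoid $\mathbb{N}_0^k$. I would obtain this from the fact that the syntactic monoid $M$ of a commutative regular language is finite and commutative: the recognizing morphism $\eta \colon \Sigma^* \to M$ then factors through the universal commutative quotient $\psi$, yielding $\phi \colon \mathbb{N}_0^k \to M$ with $\eta = \phi \circ \psi$. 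A short computation, using surjectivity of $\psi$ together with $L = \eta^{-1}(\eta(L))$, shows $P = \phi^{-1}(\eta(L))$, so $\phi$ recognizes $P$ via the finite monoid $M$.

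With recognizability in hand, the second key step is Mezei's theorem for direct products of monoids: a recognizable subset of $\mathbb{N}_0^k = \mathbb{N}_0 \times \cdots \times \mathbb{N}_0$ is a finite union $P = \bigcup_{i=1}^n P_1^{(i)} \times \cdots \times P_k^{(i)}$ of products of recognizable subsets $P_j^{(i)} \subseteq \mathbb{N}_0$, where the recognizable subsets of $\mathbb{N}_0$ are exactly the ultimately periodic sets. Translating back through $\psi^{-1}$, each product becomes a shuffle, so $L = \bigcup_{i=1}^n U_1^{(i)} \shuffle \cdots \shuffle U_k^{(i)}$ where $U_j^{(i)} \subseteq \{a_j\}^*$ is the unary language with Parikh image $P_j^{(i)}$. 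It remains to impose the single-final-state condition: an ultimately periodic set is a finite union of singletons below its threshold and arithmetic progressions, which by Lemma~\ref{lem::unary_single_final} are precisely the unary languages acceptable by a single-final-state automaton. Writing each $U_j^{(i)}$ as such a finite union and distributing the shuffle over union (shuffle being distributive over $\cup$) yields a larger, but still finite, union of shuffles of single-final-state unary languages, as required.

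The main obstacle I anticipate is the passage from regularity of $L$ to recognizability of $P$ in $\mathbb{N}_0^k$; the argument that the recognizing morphism descends to $\mathbb{N}_0^k$ because the syntactic monoid is commutative is the crux, and it is what upgrades Parikh's theorem (which only guarantees a semilinear, not necessarily recognizable, image) to the structured product decomposition supplied by Mezei's theorem. The remaining bookkeeping---splitting each ultimately periodic set into single-final-state pieces and distributing the shuffle over the resulting unions---is routine, but must be carried out to meet the precise form demanded in the statement.
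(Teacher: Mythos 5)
Your proof is correct, but it takes a genuinely different route from the one the paper relies on. The paper imports Theorem~\ref{thm:reg_commutative_form} from its cited reference, and the argument it reproduces in Section~\ref{sec:decision_problem} is purely automaton-theoretic: starting from any complete DFA $\mathcal A$ accepting $L$, one builds the commutative automaton $\mathcal C_{\mathcal A}$ of Definition~\ref{def:comm_aut} on the state set $Q_1 \times \cdots \times Q_k$, proves $L(\mathcal C_{\mathcal A}) = L(\mathcal A)$ by permuting letters (Theorem~\ref{thm:comm_aut_language}), and then reads the decomposition off state by state: the words reaching a fixed final state $(s_1,\ldots,s_k)$ are exactly $U_1 \shuffle \cdots \shuffle U_k$ with $U_j = \{\,u \in \{a_j\}^* \mid \delta(s_0,u) = s_j\,\}$ (Lemma~\ref{lem:words_in_single_state}, Corollary~\ref{cor:shuffle_union_form}), and each such $U_j$ is by its very definition accepted by a unary automaton with the single final state $s_j$, so no further splitting is needed. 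You instead pass to the syntactic monoid, observe that commutativity makes the syntactic morphism factor through the Parikh map, conclude that $\psi(L)$ is a recognizable subset of $\mathbb N_0^k$, and invoke Mezei's theorem to get the product decomposition, finishing with the ultimately-periodic splitting and distributivity of shuffle over union to meet the single-final-state requirement. Both arguments are sound. Yours is conceptually sharper in that it isolates exactly why commutativity helps --- it upgrades the merely semilinear Parikh image guaranteed by Parikh's theorem to a recognizable one --- but it leans on Mezei's theorem as a black box and needs the extra bookkeeping at the end (including discarding terms with an empty factor, which you should mention, since the statement requires non-empty $U_j^{(i)}$). The paper's construction is more elementary and, crucially for this paper, effective: the commutative automaton has at most $|Q|^k$ states and is computable in polynomial time for fixed $\Sigma$, which is precisely what the decision procedure of Theorem~\ref{thm:decision_procedure} requires; your argument, as stated, does not immediately yield that complexity bound.
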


With respect to the Constrained Synchronization Problem~\ref{def:problem_L-constr_Sync}, 
for commutative constraint languages $L(\mathcal B)$, 
we will refer more to the form given by Theorem~\ref{thm:reg_commutative_form}
than to the specific automaton $\mathcal B = (\Sigma, P, \mu, p_0, F)$ underlying it.
In Section~\ref{sec:decision_problem} we will give some details how to compute such a form
for a given automaton accepting a commutative language.

%

\section{Results}

Our main result, Theorem~\ref{thm:complete_classification}, gives a complete classification
of the computational complexity of $L\textsc{-Constr-Sync}$, for different
regular commutative constraint languages. In the following sections, we will
prove various simplifications, propositions, corollaries and lemmata that ultimately
will all be used in proving Theorem \ref{thm:complete_classification}.
First, we will give criteria that allow certain simplification
of the constraint language, and derive a mechanism to describe
a given constraint language by a set of vectors, which gives all the essential information 
with regard to our problem. This notion will be used repeatedly
in all the following arguments. In Section \ref{sec:poly} 
we will give sufficient conditions
for containment in $\PTIME$. Then
we single out those instances that give hardness results for the complexity classes
$\NP$ and $\PSPACE$ in Section~\ref{sec:NP} and Section~\ref{sec:PSPACE}. Finally, in Section~\ref{sec:main_thm},
we combine all these results to prove Theorem~\ref{thm:complete_classification}.
From Theorem~\ref{thm:complete_classification}, in the last Section~\ref{sec:decision_problem}, 
a decision procedure
is derived to decide the complexity of $L(\mathcal B)\textsc{-Constr-Sync}$,
if we allow $\mathcal B$ to be part of our input.

\subsection{Simplifications of the Constraint Language}
\label{sec:simplification}

 Our first Proposition \ref{prop:infinite_then_N}
 follows from Theorem \ref{thm:add-stuff}. Very roughy, it says
 that for the letters that are allowed infinitely often, the exact way in which
 they appear is not that important, but only that we can find arbitrary long
 sequences of them.
 We then use
 this result to derive a more compact description, in terms
 of vectors over $\mathbb N_0 \cup \{\infty\}$,
 to capture the essential part of a commutative constraint language $L$
 with respect to the problem $L\textsc{-Constr-Sync}$.

\begin{restatable}[]{proposition}{infinitethenN}{(infinite language simplification)} \label{prop:infinite_then_N}
 Let $\Sigma = \{a_1, \ldots, a_k\}$ be our alphabet.
 Consider the Constrained Synchronization Problem \ref{def:problem_L-constr_Sync}
 with commutative constraint language $L$.
 Suppose
 $$
  L = \bigcup_{i=1}^n U_1^{(i)} \shuffle \ldots \shuffle U_k^{(i)}
 $$
 with unary languages $U_j^{(i)} \subseteq \{a_j\}^*$
 for $i \in \{1,\ldots, n\}$ and $j \in \{1,\ldots k\}$.
 If for some $i_0 \in \{1,\ldots, n\}$ and $j_0 \in \{1,\ldots k\}$
 the unary language $U_{j_0}^{(i_0)}$ is infinite,
 then construct the new language
 $$
  L' =  \bigcup_{i=1}^n V_1^{(i)} \shuffle \ldots \shuffle V_k^{(i)}
 $$
 with 
 $$
  V_j^{(i)} = \left\{ \begin{array}{ll}
     \{a_j\}^* & \mbox{ if } i = i_0 \mbox{ and } j = j_0 \\
     U_j^{(i)} & \mbox{ otherwise}.
   \end{array}
  \right.
 $$
 We simply change the single language $U_{j_0}^{(i_0)}$ 
 for the language $\{a_j\}^*$.
 Then a complete and deterministic  
 input semi-automaton $\mathcal A = (\Sigma, Q, \delta)$
 has a synchronizing word in $L$
 if and only if it has one in $L'$ and $L\textsc{-Constr-Sync} \equiv_m^{\log} L'\textsc{-Constr-Sync}$.
\end{restatable}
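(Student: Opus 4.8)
The plan is to apply Theorem~\ref{thm:add-stuff} with $L$ and $L'$ playing exactly the roles their names suggest, so that the whole proposition reduces to checking its two hypotheses: the inclusion $L \subseteq L'$, and the factor condition $L' \subseteq \{\, v \in \Sigma^* \mid \exists u, w \in \Sigma^* : uvw \in L \,\}$. The inclusion is immediate: both unions run over the same index set, and $V_j^{(i)} = U_j^{(i)}$ for every $(i,j) \neq (i_0,j_0)$, while $V_{j_0}^{(i_0)} = \{a_{j_0}\}^* \supseteq U_{j_0}^{(i_0)}$; since the shuffle operation is monotone in each argument, the $i$-th term of $L$ is contained in the $i$-th term of $L'$ for every $i$, whence $L \subseteq L'$.

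For the factor condition I would first record the count-based membership characterisation that the shuffle of unary languages enjoys: because each $U_j^{(i)}$ (and each $V_j^{(i)}$) is unary over its own letter $a_j$, a word $w$ lies in a term $U_1^{(i)} \shuffle \cdots \shuffle U_k^{(i)}$ if and only if $a_j^{|w|_{a_j}} \in U_j^{(i)}$ for all $j$, and similarly for the terms of $L'$. Now take $v \in L'$, lying in, say, the $i$-th term of $L'$. If $i \neq i_0$ then $V_j^{(i)} = U_j^{(i)}$ for all $j$, so $v$ already lies in the $i$-th term of $L$ and one may take $u = w = \varepsilon$. If $i = i_0$, the characterisation gives $a_j^{|v|_{a_j}} \in U_j^{(i_0)}$ for all $j \neq j_0$, and only the count $m := |v|_{a_{j_0}}$ is unconstrained. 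Here I would use that $U_{j_0}^{(i_0)}$ is infinite: an infinite unary language contains words of arbitrarily large length (its explicit shape being as in Lemma~\ref{lem::unary_single_final}), so there is some $M \ge m$ with $a_{j_0}^M \in U_{j_0}^{(i_0)}$. Setting $u := \varepsilon$ and $w := a_{j_0}^{M-m}$, the word $uvw = v\, a_{j_0}^{M-m}$ has $a_{j_0}$-count $M$ and unchanged counts for all other letters, so by the characterisation it lies in the $i_0$-th term of $L$, hence in $L$. This establishes the factor condition, and Theorem~\ref{thm:add-stuff} then yields $L\textsc{-Constr-Sync} \equiv^{\log}_m L'\textsc{-Constr-Sync}$.

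The equivalence of yes-instances follows from the same two facts together with the two-sided ideal property of synchronizing words, Lemma~\ref{lem:append_sync}. If $\mathcal A$ has a synchronizing word in $L$, then since $L \subseteq L'$ the same word witnesses synchronization in $L'$. Conversely, if $v \in L'$ synchronizes $\mathcal A$, the factor condition supplies $u, w$ with $uvw \in L$, and by Lemma~\ref{lem:append_sync} the word $uvw$ is still synchronizing; thus $\mathcal A$ has a synchronizing word in $L$.

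I expect the only genuine content to be the factor condition in the case $i = i_0$, where the infiniteness of $U_{j_0}^{(i_0)}$ is precisely what permits padding $v$ with extra copies of $a_{j_0}$ until its $a_{j_0}$-count becomes admissible; everything else is routine bookkeeping with the count characterisation and an application of the two cited results.
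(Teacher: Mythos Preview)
Your proposal is correct and follows essentially the same route as the paper: establish $L \subseteq L'$, verify the factor condition of Theorem~\ref{thm:add-stuff} by padding any $v \in L'$ in the $i_0$-th term with enough copies of $a_{j_0}$ (using infiniteness of $U_{j_0}^{(i_0)}$), and then invoke Theorem~\ref{thm:add-stuff} for the logspace equivalence and Lemma~\ref{lem:append_sync} for the yes-instance equivalence. If anything, you are more explicit than the paper in separating the two conclusions and in spelling out the count-based membership characterisation for shuffles of unary languages.
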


Suppose $L$ is a constraint language
with 
$$
  L = \bigcup_{i=1}^n U_1^{(i)} \shuffle \ldots \shuffle U_k^{(i)}
$$
according to Theorem \ref{thm:reg_commutative_form}.
By Proposition \ref{prop:infinite_then_N},
for our purposes we can assume that if
 $U_j^{(i)}$ is infinite, then it has the
form $U_j^{(i)} = \{a_j\}^*$.
The unary languages $U_j^{(i)}$ for $j \in \{1,\ldots, k\}$
and $i \in \{1,\ldots, n\}$ 
are acccepted
by some unary automaton with a single final state. 
By Lemma \ref{lem::unary_single_final}, if such a language is non-empty
and finite it contains only a single word.
Hence, the only relevant information is whether such a unary language
part is infinite or what length has the single
unary word it contains. This is captured by the next definition.

\begin{definition}{(vector representation of $L$)}
\label{def:constraint_vector} 
%
%
 Let $\Sigma = \{a_1, \ldots, a_k\}$ be our alphabet.
 Consider the Constrained Synchronization Problem \ref{def:problem_L-constr_Sync}
 with commutative regular constraint language $L$.
 Suppose 
 \begin{equation}\label{eqn:vec_rep_def_form}
  L = \bigcup_{i=1}^n U_1^{(i)} \shuffle \ldots \shuffle U_k^{(i)}
 \end{equation}
 with non-empty unary languages $U_j^{(i)} \subseteq \{a_j\}^*$
 for $i \in \{1,\ldots, n\}$ and $j \in \{1,\ldots k\}$
 that are acceptable by unary automata with a single final 
 state.  Then we say that a set of vectors $N \subseteq (\mathbb N_0 \cup \{\infty\})^k$ 
 corresponds to $L$, according to Equation \eqref{eqn:vec_rep_def_form}, if
 $
  N = \{ (n^{(i)}_1, \ldots, n^{(i)}_k) \mid  i \in \{1,\ldots, n\} \} 
 $
 with\footnote{Note that, as by assumption,
 the languages $U_j^{(i)}$ for $i \in \{1,\ldots, n\}$ and $j \in\{1,\ldots, k\}$
 are accepted by unary automata with a single final state, by Lemma \ref{lem::unary_single_final},
 they only contain a single
 word if they are finite and non-empty.}
 $$
  n_j^{(i)} = \left\{
  \begin{array}{ll}
   \infty & \mbox{ if } U_j^{(i)} \mbox{ is infinite }, \\
   |u|    & \mbox{ if } U_j^{(i)} = \{ u \}
  \end{array}
  \right.
 $$
 for $i \in \{1,\ldots, n\}$ and $j \in \{1,\ldots, k\}$.
 By Theorem \ref{thm:reg_commutative_form}, every regular commutative 
 constraint language has at least one vector representation.
 \end{definition} 

{\footnotesize  
\begin{example} \label{ex:vector}Let $\Sigma = \{a,b,c\}$ with $a = a_1, b = a_2, c = a_3$. For the language $L = \{aa\} \shuffle b^* \cup \{a\} \shuffle \{bb\}\shuffle c(cc)^*$ we have $N = \{ (2, \infty, 0), (1, 2, \infty)\}$.
Please see Example~\ref{ex:classification} for other languages.
\end{example}
}

The language $L$ is infinite precisely if for some vector
at least one entry equals~$\infty$.
Another important observation, quite similar to Proposition \ref{prop:infinite_then_N},
allows us to make further assumptions about the constraint language, or
the vectors corresponding to it. It will be used in the proofs
of Proposition~\ref{prop:single_letter_unbounded_two_bounded}
and Proposition~\ref{prop:one_letter_bounded_two_unbounded}. 

\begin{restatable}[]{proposition}{vectorrepincomparable}{(comparable vectors simplification)}\label{prop:vector_rep_incomparable}
 Let $\Sigma = \{a_1, \ldots, a_k\}$.
 Consider $L\textsc{-Constr-Sync}$.
 Suppose $L$ has the form stated in Theorem \ref{thm:reg_commutative_form},
 \begin{equation}\label{eqn:vec_rep_form}
  L = \bigcup_{i=1}^n U_1^{(i)} \shuffle \ldots \shuffle U_k^{(i)}
 \end{equation}
 with unary languages $U_j^{(i)} \subseteq \{a_j\}^*$
 for $i \in \{1,\ldots, n\}$ and $j \in \{1,\ldots k\}$.
 Let $N$ be the vector set, corresponding to Equation \eqref{eqn:vec_rep_form} and according to Definition \ref{def:constraint_vector}.
 Suppose $x,y \in N$ with $x \le y$ and $x = (x_1^{(i_0)}, \ldots, x_k^{(i_0)})$
 for $i_0 \in \{1, \ldots, n\}$, i.e., the vector $x$
 arises out of the part $U_1^{(i_0)} \shuffle \ldots \shuffle U_k^{(i_0)}$
 in the above union for $L$.
 Construct the new language
 $$
  L' = \bigcup_{i \in \{1,\ldots n\}\setminus\{i_0\}} U_1^{(i)}\shuffle \ldots \shuffle U_k^{(i)}
 $$
 without the part $U_1^{(i_0)} \shuffle \ldots \shuffle U_k^{(i_0)}$.
 Then a complete and deterministic  
 input semi-automaton $\mathcal A = (\Sigma, Q, \delta)$
 has a synchronizing word in $L$
 if and only if it has one in $L'$
 and $L\textsc{-Constr-Sync} \equiv_m^{\log} L'\textsc{-Constr-Sync}$.
\end{restatable}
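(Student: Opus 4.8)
The plan is to show that $L\textsc{-Constr-Sync}$ and $L'\textsc{-Constr-Sync}$ have exactly the same yes-instances, after which the identity map serves as a logspace many-one reduction in both directions. Write $L_i = U_1^{(i)} \shuffle \cdots \shuffle U_k^{(i)}$ for the $i$-th part, so that $L = \bigcup_{i=1}^n L_i$ and $L' = \bigcup_{i \neq i_0} L_i$; let $i_1 \neq i_0$ be an index whose part $L_{i_1}$ realises the vector $y$, so that $L_{i_1}$ survives into $L'$. Since $L' \subseteq L$, every word that synchronizes $\mathcal{A}$ and lies in $L'$ also lies in $L$, which settles one implication immediately. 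The entire content of the proposition is the converse: a word synchronizing $\mathcal{A}$ that lies in $L$ but perhaps only in the discarded part $L_{i_0}$ must be convertible into one lying in $L'$.

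For the converse, suppose $w$ is synchronizing for $\mathcal{A}$ with $w \in L_{i_0}$ (if $w$ already lies in some $L_i$ with $i \neq i_0$ there is nothing to prove). The idea is to pad $w$ with extra letters so that its Parikh vector rises from the profile prescribed by $x$ up to the profile prescribed by $y$, landing inside $L_{i_1}$, while preserving synchronization. Concretely, I build $w' = w\, v$ where $v \in \Sigma^*$ is a suitable product of letters. For each coordinate $j$ with $y_j = \infty$, the normalization granted by Proposition \ref{prop:infinite_then_N} lets us assume $U_j^{(i_1)} = \{a_j\}^*$, so any number of occurrences of $a_j$ is admissible and no padding is required for that letter. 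For each coordinate $j$ with $y_j < \infty$, the hypothesis $x \le y$ forces $x_j < \infty$ as well, whence $U_j^{(i_0)} = \{a_j^{x_j}\}$ by Lemma \ref{lem::unary_single_final} and $|w|_{a_j} = x_j \le y_j$; I then place exactly $y_j - x_j \ge 0$ copies of $a_j$ into $v$, raising the count to $y_j$.

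It remains to verify the two defining properties of $w'$. First, by construction $|w'|_{a_j}$ lies in the length-set of $U_j^{(i_1)}$ for every $j$ — it equals $y_j$ when $y_j$ is finite and is unconstrained when $y_j = \infty$ — so the shuffle structure of $L_{i_1}$ over the disjoint unary alphabets yields $w' \in L_{i_1} \subseteq L'$. Second, $w$ is a prefix of $w'$, so taking $u = \varepsilon$ in Lemma \ref{lem:append_sync} (applied to $u\,w\,v = w'$) shows that $w'$ is again synchronizing for $\mathcal{A}$. Hence $\mathcal{A}$ has a synchronizing word in $L'$, which establishes the equivalence of yes-instances.

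Finally, since the set of input semi-automata synchronizing with respect to $L$ coincides exactly with those synchronizing with respect to $L'$, the identity map $\mathcal{A} \mapsto \mathcal{A}$ is a correct logspace many-one reduction in either direction, giving $L\textsc{-Constr-Sync} \equiv_m^{\log} L'\textsc{-Constr-Sync}$. I expect the only delicate point to be the padding at coordinates with $y_j = \infty$: invoking Proposition \ref{prop:infinite_then_N} keeps this step trivial, but if one prefers to argue directly from the raw form of Theorem \ref{thm:reg_commutative_form}, one instead appends enough copies of $a_j$ to reach the least element of the infinite, eventually periodic language $U_j^{(i_1)}$ that is at least $|w|_{a_j}$, which exists precisely because $U_j^{(i_1)}$ is infinite.
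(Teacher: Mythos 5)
Your proof is correct and follows essentially the same route as the paper: pad a synchronizing word from the discarded part $U_1^{(i_0)} \shuffle \ldots \shuffle U_k^{(i_0)}$ with extra letters so that its Parikh vector lands in the part realizing the dominating vector $y$, and preserve synchronization via the ideal property of synchronizing words (Lemma~\ref{lem:append_sync}). The only cosmetic differences are that the paper obtains the padding exponents $m_j$ directly from the structure of unary single-final-state languages (exactly your closing alternative) rather than normalizing first via Proposition~\ref{prop:infinite_then_N}, and it concludes the logspace equivalence by invoking Theorem~\ref{thm:add-stuff} instead of your equally valid observation that coinciding yes-instance sets make the identity map a reduction in both directions.
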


\begin{example}
 Let $\Sigma = \{a,b,c\}$ with $a = a_1$, $b = a_2$, $c = a_3$.
 If $L = aaa^* \shuffle \{b\} \cup a^* \shuffle \{bb\} \shuffle \{c\} \cup \{a\}$,
 then $N = \{ (\infty, 1, 0), (\infty, 2, 1), (1,0,0) \}$.
 After simplification by Proposition \ref{prop:vector_rep_incomparable}
 and Proposition \ref{prop:infinite_then_N},
 we get a computationally equivalent constrained synchronization problem,
 with constraint language $L' = a^* \shuffle \{bb\}\shuffle \{c\}$
 and vector representation $N' = \{(\infty, 2, 1)\}$. In this case $N'$ contains precisely the maximal vector in $N$.
\end{example}

Hence, by taking the maximal vectors, which does not change the complexity, 
we can assume that the vectors associated with any regular commutative
constraint language are pairwise incomparable.

\subsection{The Polynomial Time Solvable Variants of the Problem}
\label{sec:poly}

If in the sets $U_1^{(i)} \shuffle \ldots \shuffle U_k^{(i)}$
each $U_j^{(i)}$ is either infinite or $U_j^{(i)} = \{\varepsilon\}$,
then $L\textsc{-Constr-Sync}\in \PTIME$.

\begin{restatable}[]{proposition}{vectorzeroinfty}\label{prop:vector_zero_infty}
 Let $\Sigma = \{a_1, \ldots, a_k\}$ be our alphabet.
 Consider the Constrained Synchronization Problem \ref{def:problem_L-constr_Sync}.
 Suppose the commutative constraint language $L$ is
 decomposed as stated in Theorem \ref{thm:reg_commutative_form}, 
 \begin{equation}\label{eqn:union_L_infty_or_zero_case}
  L = \bigcup_{i=1}^n U_1^{(i)} \shuffle \ldots \shuffle U_k^{(i)}.
 \end{equation}
 Denote by $N = \{ (n_1^{(i)}, \ldots, n_k^{(i)} \mid i = 1, \ldots, n \}$
 the vector representation, according 
 to Definition \ref{def:constraint_vector} and corresponding 
 to Equation \eqref{eqn:union_L_infty_or_zero_case}.
 If for all $i \in \{1,\ldots, n\}$ and all $j \in \{1,\ldots, k\}$ 
 we have $n_j^{(i)} \in \{0,\infty\}$,
 then the problem is in \PTIME. 
\end{restatable}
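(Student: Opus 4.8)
The plan is to reduce to ordinary (unconstrained) synchronization on sub-alphabets. The hypothesis $n_j^{(i)} \in \{0,\infty\}$ says precisely that every unary part $U_j^{(i)}$ is either $\{\varepsilon\}$ (when $n_j^{(i)} = 0$) or infinite (when $n_j^{(i)} = \infty$). First I would invoke Proposition~\ref{prop:infinite_then_N} repeatedly to replace every infinite part by the full unary language $\{a_j\}^*$, obtaining a computationally equivalent language $L'$ in which each part $U_1^{(i)} \shuffle \ldots \shuffle U_k^{(i)}$ has every factor equal to $\{\varepsilon\}$ or to some $\{a_j\}^*$.

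The key observation is that such a shuffle collapses to a free monoid over a sub-alphabet. Writing $\Gamma_i = \{a_j \mid n_j^{(i)} = \infty\}$, I claim that
$$
 U_1^{(i)} \shuffle \ldots \shuffle U_k^{(i)} = \Gamma_i^*,
$$
since shuffling with $\{\varepsilon\}$ leaves a language unchanged, and the shuffle of the languages $\{a_j\}^*$ over distinct letters $a_j$ with $j$ ranging over the index set of $\Gamma_i$ is exactly the set of all words over $\Gamma_i$: any word over $\Gamma_i$ arises as an interleaving of its letter-blocks $a_j^{|w|_{a_j}}$, and conversely every interleaving uses only letters of $\Gamma_i$. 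Hence $L' = \bigcup_{i=1}^n \Gamma_i^*$ is a finite union of languages of the form $\Gamma^*$ for sub-alphabets $\Gamma \subseteq \Sigma$.

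By Lemma~\ref{lem:union} it then suffices to prove $\Gamma^*\textsc{-Constr-Sync} \in \PTIME$ for each fixed $\Gamma \subseteq \Sigma$. Here I would use the simple but decisive fact that a synchronizing word for the input semi-automaton $\mathcal A = (\Sigma, Q, \delta)$ lying in $\Gamma^*$ is nothing other than a synchronizing word for the restricted semi-automaton $\mathcal A|_\Gamma = (\Gamma, Q, \delta|_{Q \times \Gamma})$. Since $\mathcal A$ is complete and deterministic, so is $\mathcal A|_\Gamma$, so Theorem~\ref{thm:unrestricted_sync_poly_time} decides its synchronizability in polynomial time. The degenerate case $\Gamma = \emptyset$, where $\Gamma^* = \{\varepsilon\}$ and one only checks whether $|Q| = 1$, is trivially in $\PTIME$ (and is also covered by Lemma~\ref{lem:finite}).

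I do not expect a serious obstacle; the content lies entirely in identifying each shuffle part with $\Gamma_i^*$ and in recognizing that constrained synchronization to $\Gamma^*$ is unconstrained synchronization of the restriction $\mathcal A|_\Gamma$. The one point requiring a little care is the bookkeeping of the reductions: one must check that the logspace equivalence of Proposition~\ref{prop:infinite_then_N} and the closure under finite unions of Lemma~\ref{lem:union} compose correctly, so that membership in $\PTIME$ transfers back from the $\Gamma_i^*$ to the original language $L$.
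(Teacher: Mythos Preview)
Your proposal is correct and follows essentially the same route as the paper: invoke Proposition~\ref{prop:infinite_then_N} to normalize each infinite unary part to $\{a_j\}^*$, identify each shuffle $U_1^{(i)}\shuffle\ldots\shuffle U_k^{(i)}$ with $\Gamma_i^*$ for a sub-alphabet $\Gamma_i\subseteq\Sigma$, reduce $\Gamma^*\textsc{-Constr-Sync}$ to unconstrained synchronization of the restriction $\mathcal A|_\Gamma$ via Theorem~\ref{thm:unrestricted_sync_poly_time}, and conclude with Lemma~\ref{lem:union}. Your write-up is in fact slightly more explicit than the paper's on the shuffle identity and the degenerate case $\Gamma=\emptyset$, but the argument is the same.
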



Interestingly, because of Lemma~\ref{lem:length_sync_from_word_unary} stated next, if 
in the sets $U_1^{(i)} \shuffle \ldots \shuffle U_k^{(i)}$,
we have at most one $j_0 \in \{1,\ldots, k\}$ such that $U_{j_0}^{(i)} = \{ a_{j_0} \}$,
and at most one other $j_1 \in \{1,\ldots, k\}$ such that $U_{j_1}^{(i)}$
is infinite, and $U_{j}^{(i)} = \{\varepsilon\}$ for all $j \in \{1,\ldots, k\} \setminus \{j_0, j_1\}$,
then also $L\textsc{-Constr-Sync}\in \PTIME$.
Later, we will see that only a slight relaxation of this condition,
for example, if instead $U_{j_0}^{(i)} = \{ a_{j_0}a_{j_0} \}$
in the above, then the problem becomes \NP-complete.

\begin{restatable}[]{lemma}{lengthsyncfromwordunary} 
\label{lem:length_sync_from_word_unary}
 Let $\mathcal A = (\Sigma, Q, \delta)$ be a unary semi-automaton
 with $\Sigma = \{a\}$ and $S \subseteq Q$.
 Then $|\delta(S, a^k)| = 1$ for some $k \ge 0$
 if and only if $|\delta(S, a^{|Q|-1})| = 1$.
\end{restatable}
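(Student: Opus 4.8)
The backward implication is trivial, since $k = |Q|-1$ already witnesses the existential statement on the left. For the forward implication the plan is to regard the action of the single letter as a function $f \colon Q \to Q$, $f(q) := \delta(q,a)$, so that $\delta(S,a^k) = f^k(S)$, and to prove the sharper fact that the cardinality sequence $|f^0(S)|, |f^1(S)|, |f^2(S)|, \ldots$ already attains its minimum at index $|Q|-1$. Two elementary observations drive everything: the sequence is non-increasing (the image of a set under a map never grows), and once a value equals $1$ it stays $1$ (a singleton maps to a singleton). Hence it suffices to show that $|f^{|Q|-1}(S)| = \min_{k \ge 0} |f^k(S)|$; then any $k$ with $|f^k(S)| = 1$ forces $|f^{|Q|-1}(S)| = 1$ as well.

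The heart of the argument is the descending chain of images $Q \supseteq f(Q) \supseteq f^2(Q) \supseteq \cdots$. I would first note that this chain is genuinely decreasing (from $f(Q) \subseteq Q$ one gets $f^{k+1}(Q) \subseteq f^{k}(Q)$ by monotonicity of images under inclusion) and that, being a non-increasing chain of subsets of $Q$, it stabilizes: as soon as $f^{k}(Q) = f^{k+1}(Q)$ it is constant from then on. The key quantitative point is that stabilization occurs no later than index $|Q|-1$: before stabilization each step strictly decreases the cardinality by at least one, starting from $|Q|$ and never dropping below $1$, so at most $|Q|-1$ strict drops are possible. Writing $C := f^{|Q|-1}(Q)$ for the stable value, one has $f(C) = C$, and since $C$ is finite a surjection of $C$ onto itself is a bijection; thus $f$ restricted to $C$ is a permutation.

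With this in hand the conclusion follows quickly. Because $S \subseteq Q$, we have $f^{|Q|-1}(S) \subseteq f^{|Q|-1}(Q) = C$, and $C$ is $f$-invariant, so $f^k(S) \subseteq C$ for every $k \ge |Q|-1$. As $f$ acts injectively on $C$, the cardinality no longer changes once $k \ge |Q|-1$, i.e. $|f^k(S)|$ is constant for all such $k$. Combined with the monotonicity above, this identifies $|f^{|Q|-1}(S)|$ as the minimum of the entire sequence, which is exactly what was required.

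The only genuinely delicate step is the stabilization bound, namely that the image chain reaches its fixed value within $|Q|-1$ iterations rather than merely within $|Q|$; this is where the strict-decrease counting must be carried out carefully. The remaining steps — monotonicity of $|f^k(S)|$, the $f$-invariance and permutation property of $C$, and the passage from ``constant cardinality after $|Q|-1$ steps'' to the stated equivalence — I expect to be routine. Finally, the chain automaton $f(i) = i+1$ on $\{1,\dots,|Q|\}$ with a self-loop at the top state and $S = \{1, |Q|\}$ shows that the bound $|Q|-1$ is tight.
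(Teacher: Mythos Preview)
Your proof is correct and follows essentially the same line as the paper's: both arguments show that after $|Q|-1$ applications of $f$ the image of $S$ lands inside a set on which $f$ acts as a permutation (your stable image $C = f^{|Q|-1}(Q)$, the paper's set $T$ of cycle states from its auxiliary Lemma~\ref{lem:cycle_states}), so that the cardinality of $f^k(S)$ is frozen from index $|Q|-1$ onward. The only difference is packaging: the paper invokes the separate cycle-state lemma and a per-state pigeonhole argument to place $R = f^{|Q|-1}(S)$ inside $T$, whereas you derive the same conclusion self-containedly via the descending image chain $Q \supseteq f(Q) \supseteq \cdots$ and its stabilization bound; your version is a bit more streamlined but not a genuinely different route.
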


\begin{restatable}[]{proposition}{vectoratmostoneinftyatmostoneone}\label{prop:vector_at_most_one_infty_at_most_one_one}
 Let $\Sigma = \{a_1, \ldots, a_k\}$ be our alphabet.
 Consider the Constrained Synchronization Problem \ref{def:problem_L-constr_Sync}.
 Suppose the commutative constraint language $L$ is
 decomposed as stated in Theorem \ref{thm:reg_commutative_form},
 \begin{equation}\label{eqn:union_L_single_infty_single_one}
  L = \bigcup_{i=1}^n U_1^{(i)} \shuffle \ldots \shuffle U_k^{(i)}.
 \end{equation}
 Denote by $N = \{ (n_1^{(i)}, \ldots, n_k^{(i)} \mid i = 1, \ldots, n \}$
 the vector representation, according 
 to Definition \ref{def:constraint_vector} and corresponding
 to Equation \eqref{eqn:union_L_single_infty_single_one}.
 If for all $i \in \{1,\ldots, n\}$ 
 in the vector $(n_1^{(i)}, \ldots, n_k^{(i)})$,
 at most one entry equals $\infty$
 and at most one entry is non-zero, and if
 so it equals one, 
 then the problem is solvable in polynomial time. 
\end{restatable}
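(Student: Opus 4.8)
The plan is to reduce, via the union Lemma~\ref{lem:union}, to a single summand $U_1^{(i)} \shuffle \cdots \shuffle U_k^{(i)}$ of the decomposition and then to analyse the few shapes such a summand can take under the hypothesis. For a fixed $i$ at most one factor, say $U_{j_1}^{(i)}$, is infinite --- in which case, by Proposition~\ref{prop:infinite_then_N}, I may assume it equals $\{a_{j_1}\}^*$ --- at most one further factor, say $U_{j_0}^{(i)}$, equals the single-letter word $\{a_{j_0}\}$, and all remaining factors equal $\{\varepsilon\}$ and hence vanish under the shuffle. Writing $b = a_{j_1}$ and $c = a_{j_0}$, the summand language is therefore one of $\{\varepsilon\}$, $\{c\}$, $\{b\}^*$, or $\{c\}\shuffle\{b\}^* = \{\, b^s c b^t \mid s,t \ge 0 \,\}$.

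The first two cases yield finite languages, so they are in $\PTIME$ by Lemma~\ref{lem:finite}. For $\{b\}^*$, deciding whether $\mathcal A$ admits a synchronizing word $b^k$ is, by Lemma~\ref{lem:length_sync_from_word_unary} applied to the restriction of $\mathcal A$ to the letter $b$ (with $S = Q$), equivalent to testing whether $|\delta(Q, b^{|Q|-1})| = 1$, a single polynomial-time computation.

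The main case is $\{c\}\shuffle\{b\}^*$, where I must decide whether there exist $s,t \ge 0$ with $|\delta(Q, b^s c b^t)| = 1$. The observation driving the polynomial bound is that the sequence of image sets $\delta(Q, b^s)$ stabilizes after at most $|Q|-1$ steps: in the functional graph of the action of $b$ every state reaches the set $C$ of cyclic states within $|Q|-1$ applications, and since $b$ restricts to a permutation of $C$, one gets $\delta(Q, b^s) = C$ for all $s \ge |Q|-1$. Hence $\delta(Q, b^s)$ takes only the $|Q|$ values obtained for $s \in \{0,1,\ldots,|Q|-1\}$, so it suffices to iterate over these $s$; for each I compute $T_s = \delta(\delta(Q, b^s), c)$ and, invoking Lemma~\ref{lem:length_sync_from_word_unary} once more on the $b$-restriction, test whether $|\delta(T_s, b^{|Q|-1})| = 1$. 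The answer is \emph{yes} iff some $s$ passes this test, and the whole procedure runs in polynomial time; recombining all summands through Lemma~\ref{lem:union} then places $L\textsc{-Constr-Sync}$ in $\PTIME$.

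The step I expect to require the most care is justifying the stabilization $\delta(Q, b^s) = C$ for $s \ge |Q|-1$ and using it to conclude that only polynomially many prefixes $b^s$ need be examined; everything else is either a direct appeal to the cited results or routine bookkeeping. I must be careful that this stabilization genuinely relies on starting from the \emph{full} state set $Q$ --- so that the image is exactly the permuted set $C$ of cyclic states, with no unbounded period arising --- in contrast to the merely eventually periodic behaviour one would obtain from an arbitrary starting subset.
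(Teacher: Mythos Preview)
Your proposal is correct and follows essentially the same route as the paper: reduce to a single summand via Lemma~\ref{lem:union} and Proposition~\ref{prop:infinite_then_N}, dispose of the finite and purely unary cases, and in the main case $\{a\}^* \shuffle \{b\}$ bound the prefix exponent by the stabilization $\delta(Q,a^{|Q|-1}) = T$ (the set of $a$-cycle states) and the suffix exponent by Lemma~\ref{lem:length_sync_from_word_unary}, leaving only $|Q|$ candidate words to test. The only cosmetic differences are your letter naming ($b$ for the unbounded symbol, $c$ for the single-occurrence symbol, versus the paper's $a,b$) and that the paper dispatches the $\{b\}^*$ subcase via Proposition~\ref{prop:vector_zero_infty} rather than directly via Lemma~\ref{lem:length_sync_from_word_unary}.
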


\subsection{The $\NP$-complete Variants of the Problem}
\label{sec:NP}

In this section, we state a criterion,
in terms of the constraint language, which gives $\NP$-hardness.
Surprisingly, in contrast to Proposition \ref{prop:vector_at_most_one_infty_at_most_one_one},
if some letter, whose appearance is bounded in 
an infinite language of the form $U_1^{(i)} \shuffle \ldots \shuffle U_k^{(i)}$,
is allowed to appear more than once, then we get \NP-hardness. 

\begin{restatable}[]{proposition}{singleletterunboundedtwobounded} \label{prop:single_letter_unbounded_two_bounded}
 Let $\Sigma = \{a_1, \ldots, a_k\}$ be our alphabet.
 Consider the Constrained Synchronization Problem \ref{def:problem_L-constr_Sync}.
 Suppose the commutative constraint language $L$ is
 decomposed as stated in Theorem \ref{thm:reg_commutative_form},
 \begin{equation}\label{eqn:union_L_np_hard_case}
  L = \bigcup_{i=1}^n U_1^{(i)} \shuffle \ldots \shuffle U_k^{(i)}.
 \end{equation}
 Denote by $N$ the vector representation, according 
 to Definition \ref{def:constraint_vector} and corresponding
 to Equation \eqref{eqn:union_L_np_hard_case}. 
 Suppose we find $i_0 \in \{1,\ldots, k\}$
 and a \emph{maximal}\footnote{Note that, for example, a commutative regular language
 with vector representation $N = \{ (2,\infty,0), (\infty,\infty,0) \}$ would give
 a constrained problem in \PTIME.} 
 vector $(n_1^{(i_0)}, \ldots, n_k^{(i_0)}) \in N$
 such that at least one of the following conditions is true:
 \begin{enumerate}
     \item[(i)] $n_{j_0}^{(i_0)} = \infty$ and $2\le n_{j_1}^{(i _0)} < \infty$
            for distinct $j_0, j_1 \in \{1,\ldots, k\}$, or
     \item[(ii)] $n_{j_0}^{(i_0)} = \infty$ and $1\le n_{j_1}^{(i_0)}, n_{j_2}^{(i_0)} < \infty$
            for distinct $j_0, j_1, j_2 \in \{1,\ldots, k\}$.
 \end{enumerate}
 Then the problem is \NP-hard.
\end{restatable}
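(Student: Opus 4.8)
The plan is to establish \NP-hardness through a polynomial-time many-one reduction from $3$-SAT. Before building the reduction I would normalise the constraint language. By Proposition~\ref{prop:infinite_then_N} I may assume every $\infty$-entry of the chosen vector corresponds to the full unary language, and by Proposition~\ref{prop:vector_rep_incomparable} I may delete all vectors dominated by another; since $(n_1^{(i_0)},\ldots,n_k^{(i_0)})$ is \emph{maximal}, it survives this pruning. Writing $a := a_{j_0}$ for the unbounded letter, the part of the union attached to this vector contributes exactly the words in which $a$ occurs arbitrarily often while the remaining letters occur a fixed, constant number of times: in case~(i) the letter $a_{j_1}$ must appear exactly $n_{j_1}^{(i_0)} \ge 2$ times, and in case~(ii) two distinct letters $a_{j_1}, a_{j_2}$ must appear at least once each. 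Up to commutativity such a word reads $a^{t_0}\, s_1\, a^{t_1}\, s_2 \cdots$ where the $s_r$ are the constantly many non-$a$ occurrences, which I will call \emph{pulses}, and the $t_r \ge 0$ are arbitrary.

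Given a $3$-CNF formula $\varphi$ over variables $x_1,\ldots,x_r$, I would let the input semi-automaton $\mathcal A$ be a disjoint union of cycles on which $a$ acts by rotation: one variable cycle per $x_\ell$, with the $r$ variable cycles of pairwise coprime lengths (say distinct small primes $p_\ell$), and one clause cycle per clause whose length is the product of its three variables' primes, so that the clause cycle's phase determines all three truth values. By the Chinese Remainder Theorem a single power $a^{t}$ realises any prescribed tuple of residues across the variable cycles simultaneously, so the free $a$-powers encode an arbitrary truth assignment. The pulse letters are designed so that rotation followed by a pulse folds the cycles in such a way that the whole automaton collapses to one common sink precisely when the encoded assignment satisfies every clause; because a pulse is a single letter acting on the entire state set, it checks all clause cycles \emph{in parallel}, which is what lets a constant number of pulses suffice. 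A subtlety I would keep in mind is that the reachable set starts as the \emph{entire} state set, so a rotation is invariant before the first pulse: the assignment must be read off from the $a$-powers located between and after the pulses, and the pulse maps must be genuinely non-injective in order to shrink the reachable set.

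It then remains to prove the equivalence and to lift it from the single maximal vector to all of $L$. In one direction a satisfying assignment yields, via the Chinese Remainder Theorem, concrete $a$-powers that drive $\mathcal A$ to a singleton through a word of the prescribed pulse pattern, hence a word in $L$; conversely, any synchronising word of $\mathcal A$ lying in $L$ fixes the relevant rotation classes, thereby an assignment that must satisfy every clause. Lemma~\ref{lem:length_sync_from_word_unary} keeps the argument polynomial, bounding how far one must rotate to detect collapse, so I may reason about the $a$-action on each reachable subset through a single representative power. Finally, to rule out a spurious synchronising word coming from another part of the union, I would invoke the maximality of the chosen vector together with Lemma~\ref{lem:union}: maximality forbids any part of $L$ that keeps $a$ unbounded while offering at least as many bounded pulse letters, since such a part would dominate the chosen vector, and this is exactly the configuration that trivialised the problem in the footnote's example. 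The surviving parts either bound $a$ or supply too few pulses, and the gadget is arranged so that pure $a$-rotation never synchronises and fewer than the required number of pulses never collapses the sink, so none of them can succeed.

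The hard part will be the gadget design itself. Making a constant number of single-letter pulses act as global AND-gates over an unbounded family of clause cycles, while respecting that commutativity lets the pulses sit at arbitrary positions among the $a$'s and that the initially full reachable set forces all information to be carried by the inter-pulse rotations, is the technical heart of the proof; the Chinese Remainder encoding, the polynomial size bound, and the bookkeeping for the other union parts are comparatively routine once the pulse gadget is correct. Case~(ii) is the easier variant, since there the two pulses are distinct letters that may be designed independently, whereas case~(i) must realise both collapsing steps with two identical applications of the same letter.
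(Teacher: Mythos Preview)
There is a genuine gap in how you handle the other parts of the union. Your claim that, after pruning to incomparable vectors, every surviving part ``either bounds $a$ or supplies too few pulses'' is false: incomparability only says no other vector \emph{dominates} the chosen one, so another vector may well have $\infty$ in coordinate $j_0$ and \emph{more} occurrences of $a_{j_1}$, as long as it is smaller in some third coordinate your gadget ignores. For instance with chosen vector $(\infty,2,3)$ and a second vector $(\infty,5,1)$, nothing you have said excludes a synchronising word with five $b$-pulses lying in the second part. Your appeal to Lemma~\ref{lem:union} is also misplaced: that lemma transfers \emph{upper} bounds across a union and says nothing about hardness. The paper handles this point by an explicit gadget: for every other index $i\ne i_0$ it picks a coordinate $\lambda(i)$ in which the $i_0$-vector is strictly larger, and adjoins a chain of fresh states $p_{i,1},\ldots,p_{i,m(i)}$ that can only reach the sink after $m(i)$ copies of $a_{\lambda(i)}$; any synchronising word is thereby forced to satisfy letter-counts that place it in the $i_0$-part alone.

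Aside from this, your route differs from the paper's. You reduce from $3$-SAT via a Chinese-Remainder encoding with prime-length cycles, leaving the pulse gadget---which you rightly identify as the crux---unspecified. The paper instead reduces from \textsc{Intersection-Non-Emptiness} for \emph{unary} automata, which is already $\NP$-complete and matches the setting directly: the unbounded letter $a$ runs the given unary automata in parallel, the first $b$ resets each to its start state via auxiliary states $t_i$, and the remaining $m-1$ copies of $b$ advance a chain $r_1\to\cdots\to r_{m-1}$ that can be entered only from final states. This sidesteps entirely the problem of engineering two identical pulses to act as a global AND over unboundedly many clause gadgets. Your CRT idea is close in spirit---pairwise coprime cycle lengths are exactly what makes unary intersection hard---so you are in effect rebuilding the source problem inside the reduction rather than using it off the shelf; but without the gadget and without a correct treatment of the other union parts, the proposal does not yet constitute a proof.
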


\subsection{The $\PSPACE$-complete Variants of the Problem}
\label{sec:PSPACE}

\begin{restatable}[]{proposition}{oneletterboundedtwounbounded} \label{prop:one_letter_bounded_two_unbounded}
 %
 %
 %
 %
 
 Let $\Sigma = \{a_1, \ldots, a_k\}$ be our alphabet.
 Consider the Constrained Synchronization Problem \ref{def:problem_L-constr_Sync}.
 Suppose the commutative constraint language $L$ is
 decomposed as stated in Theorem \ref{thm:reg_commutative_form},
 \begin{equation}\label{eqn:union_L_np_hard_case3}
  L = \bigcup_{i=1}^n U_1^{(i)} \shuffle \ldots \shuffle U_k^{(i)}.
 \end{equation}
 Denote by $N$ the vector representation, according 
 to Definition \ref{def:constraint_vector} and corresponding
 to Equation \eqref{eqn:union_L_np_hard_case3}.
 Suppose we find $i_0 \in \{1,\ldots, n\}$ and distinct $j_0, j_1, j_2 \in \{1,\ldots, k\}$
 and a \emph{maximal} vector $(n_1^{(i_0)}, \ldots, n_k^{(i_0)}) \in N$
 such that
 $n_{j_0}^{(i_0)} = n_{j_1}^{(i_0)} = \infty$ and $1\le n_{j_2}^{(i_0)} < \infty$.
 Then the problem is $\PSPACE$-hard.
\end{restatable}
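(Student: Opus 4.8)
The plan is to prove $\PSPACE$-hardness by a logspace reduction from a fixed $\PSPACE$-complete constrained synchronization problem established in \cite{DBLP:conf/mfcs/FernauGHHVW19}, for concreteness the two-phase constraint $L(\mathcal B_0) = a^* b^*$ over a binary alphabet $\{a,b\}$. Before constructing the reduction, I would normalize the relevant part of $L$. By Proposition~\ref{prop:vector_rep_incomparable} we may assume the vectors of $N$ are pairwise incomparable, so that $(n_1^{(i_0)}, \ldots, n_k^{(i_0)})$ is a maximal vector, and by Proposition~\ref{prop:infinite_then_N} every $\infty$-entry corresponds to a full unary language $\{a_\ell\}^*$. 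Write $a := a_{j_0}$ and $b := a_{j_1}$ for the two unbounded letters, $c := a_{j_2}$ for the bounded letter, which must occur exactly $m := n_{j_2}^{(i_0)} \ge 1$ times, and partition the remaining letters into those that must occur a fixed positive number of times, those that must be absent (entry $0$), and further unbounded letters. It then suffices to engineer an input semi-automaton whose $L$-synchronizing words are forced to belong to the single part $U_1^{(i_0)} \shuffle \cdots \shuffle U_k^{(i_0)}$ and to encode a solution of the base instance.

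Given a base input automaton $\mathcal A_0 = (\{a,b\}, Q_0, \delta_0)$ for $a^*b^*\textsc{-Constr-Sync}$, I would build $\mathcal A = (\Sigma, Q, \delta)$ over the full alphabet as follows. The core of $Q$ carries a copy of $Q_0$ together with a \emph{phase flag}: in phase~$1$ the letter $a$ acts as $\delta_0(\cdot, a)$ while $b$ acts as the identity, and dually in phase~$2$ the letter $b$ acts as $\delta_0(\cdot, b)$ while $a$ acts as the identity; the block $c^m$ is the unique phase switch, moving every state from phase~$1$ to phase~$2$, whereas the $(m{+}1)$-st occurrence of $c$ routes into a dedicated sink from which no word synchronizes. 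Every letter that must be absent in the target part (entry $0$) is likewise sent to this sink, so that using it destroys synchronization; every further unbounded letter acts as the identity everywhere, so that it can neither help nor hurt; and every remaining bounded letter is equipped with a short counting cycle that reaches a synchronizable configuration only after it has been applied exactly its prescribed number of times. Because $L$ is commutative it cannot itself impose the order ``$a$'s, then $c^m$, then $b$'s''; the whole burden of enforcing this order, and of enforcing the exact counts of the bounded letters, is shifted onto $\mathcal A$ through these sink, identity, and counting gadgets.

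Completeness is the easy direction: if $a^i b^j$ synchronizes $\mathcal A_0$, then the word $a^i\, c^m\, b^j$, padded with the prescribed occurrences of the other bounded letters, lies in $U_1^{(i_0)} \shuffle \cdots \shuffle U_k^{(i_0)} \subseteq L$ and synchronizes $\mathcal A$ by construction. \textbf{The main obstacle} is the converse, namely showing that \emph{every} word $w \in L$ synchronizing $\mathcal A$ forces a base solution. Here one argues first that $w$ must lie in the distinguished part: any foreign or forbidden letter would have driven $\mathcal A$ into the non-synchronizable sink, and the counting gadgets force the bounded letters (including $c$, hence exactly $m$ occurrences) to appear as prescribed, so $w$ cannot be contributed by any of the other, incomparable maximal parts of the union. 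Having localized $w$ to this part, the phase-flag semantics show that the net action of $w$ on the core equals $\delta_0(\cdot, a^{|w|_a} b^{|w|_b})$, so $a^{|w|_a} b^{|w|_b}$ synchronizes $\mathcal A_0$. I expect the delicate bookkeeping to be in handling general $m$ together with the simultaneous presence of several bounded and several unbounded letters; Theorem~\ref{thm:add-stuff} can be applied beforehand to normalize the constraint and cut down the number of cases. Since $\mathcal A$ is logspace-computable from $\mathcal A_0$, the reduction establishes $\PSPACE$-hardness, and together with Theorem~\ref{thm:L-contr-sync-PSPACE} it yields $\PSPACE$-completeness.
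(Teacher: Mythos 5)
Your reduction starts from $a^*b^*\textsc{-Constr-Sync}$, which you assert is a $\PSPACE$-complete problem established in \cite{DBLP:conf/mfcs/FernauGHHVW19}. This premise is false, and it is fatal: $a^*b^*\textsc{-Constr-Sync}$ is solvable in polynomial time, so no reduction from it can prove $\PSPACE$-hardness. To see membership in $\PTIME$, note that for an input semi-automaton $\mathcal A = (\{a,b\}, Q, \delta)$ the sets $Q_i = \delta(Q, a^i)$ satisfy $Q_{i+1} = \delta(\delta(Q,a), a^i) \subseteq \delta(Q, a^i) = Q_i$, so they form a decreasing chain that stabilizes after at most $|Q|$ steps; hence only the sets $Q_0, Q_1, \ldots, Q_{|Q|}$ can occur, and for each of them, by Lemma~\ref{lem:length_sync_from_word_unary}, some $j$ with $|\delta(Q_i, b^j)| = 1$ exists if and only if $|\delta(Q_i, b^{|Q|-1})| = 1$. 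Checking these polynomially many candidates is a polynomial-time procedure. This is consistent with \cite{DBLP:conf/mfcs/FernauGHHVW19}, whose $\PSPACE$-complete constraints (such as $a(b+c)^*$) all use at least three letters, and with the very statement you are proving, which requires three distinct indices $j_0, j_1, j_2$.

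The deeper problem is that your phase-flag gadget erases exactly the feature that makes the commutative problem hard: by making $b$ act as the identity in phase~1 and $a$ act as the identity in phase~2, every word of $L$ acts on the core like some word $a^i b^j$, so your construction can never encode anything harder than the two-phase problem, which is in $\PTIME$ as shown above. The hardness of a part $\{a\}^* \shuffle \{b\}^* \shuffle \{c^m\}$ comes precisely from the \emph{unrestricted interleaving} of the two unbounded letters. Accordingly, the paper reduces from \textsc{Sync-Into-Subset} (Decision Problem~\ref{def:sync-into-subset}), which is $\PSPACE$-complete already for binary alphabets: the letters $a,b$ act on $Q$ exactly as in the given instance $(\mathcal A, S)$ --- arbitrary interleavings included --- and the bounded letter $c$ serves as a gate, sending every state of $S$ to the head of a chain $s_1 \to s_2 \to \cdots \to s_m$ (advanced by further $c$'s, with $s_m$ the unique sink state) while fixing every state of $Q \setminus S$. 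Since a word of the distinguished part contains exactly $m$ occurrences of $c$, and a state of $Q$ needs all $m$ of them to traverse the chain, all states must already lie in $S$ when the first $c$ is read; the prefix before the first $c$, restricted to $\{a,b\}$, is then a witness for \textsc{Sync-Into-Subset}. Your remaining devices (taking maximal incomparable vectors, the counting gadget forcing the synchronizing word into the distinguished part of the union, padding in the easy direction) do match the paper's proof and are sound; it is only the source of hardness that must be replaced.
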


\subsection{Main Theorem}
\label{sec:main_thm}

Combining everything up to now gives our main computational complexity classification result
for $L(\mathcal B)\textsc{-Constr-Sync}$.

\begin{restatable}[]{theorem}{completeclassification} \label{thm:complete_classification}
 Let $\Sigma = \{a_1, \ldots, a_k\}$ be our alphabet.
 Consider the Constrained Synchronization Problem \ref{def:problem_L-constr_Sync}.
 Suppose the commutative constraint language $L$ is
 decomposed as stated in Theorem \ref{thm:reg_commutative_form},
 \begin{equation}\label{eqn:union_L_complete_classification}
  L = \bigcup_{i=1}^n U_1^{(i)} \shuffle \ldots \shuffle U_k^{(i)}.
 \end{equation}
 Denote by $N = \{ (n_1^{(i)}, \ldots, n_k^{(i)}) \mid i = 1, \ldots, n \}$
 the vector representation, according 
 to Definition \ref{def:constraint_vector} and corresponding
 to Equation \eqref{eqn:union_L_complete_classification}. By taking
 the maximal vectors in $N$, which is no restriction by Proposition~\ref{prop:vector_rep_incomparable}, we can assume
 the vectors in $N$ are incomparable. 
 \begin{enumerate}
   
     \item[(i)] 

         Suppose for all $i \in \{1,\ldots, n\}$,
         if we have distinct $j_0, j_1 \in \{1,\ldots, k\}$
         with $n_{j_0}^{(i)} = n_{j_1}^{(i)} = \infty$, then $n_j^{(i)} \in \{0,\infty\}$
         for all other $j \in \{1,\ldots, k\} \setminus \{j_0, j_1\}$.
         More formally,
         \begin{multline*}
          \forall i \in \{1,\ldots, n \} : ( \exists j_0, j_1 \in \{1,\ldots, k\} : j_0 \ne j_1 \land n_{j_0}^{(i)} = n_{j_1}^{(i)} = \infty ) \\  \rightarrow ( \forall j \in \{1,\ldots, k\} : n_j^{(i)} \in \{0,\infty\} ). 
         \end{multline*}
         Furthermore, suppose $N$
         fulfills the condition mentioned in Proposition \ref{prop:single_letter_unbounded_two_bounded},
         then it is $\NP$-complete.
         
     \item[(ii)] If the set $N$  fulfills the condition imposed by Propostion \ref{prop:one_letter_bounded_two_unbounded},
      then it is $\PSPACE$-complete.
    \item[(iii)] In all other cases the problem is in $\PTIME$.
 \end{enumerate}
\end{restatable}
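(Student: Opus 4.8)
The plan is to prove the classification as a \emph{gluing} theorem: all the genuine work (the reductions giving hardness and the algorithms giving membership) already lives in the cited propositions, so what remains is to route each shape of vector set to the matching upper and lower bound, using Lemma~\ref{lem:union} to lift single-vector bounds to the whole union. The one subtlety that must be pinned down at the very start is the role of \emph{maximality}: a dominated vector can \emph{look} hard (e.g.\ carry a finite entry $\ge 2$) while contributing nothing new to $L$, exactly as in the footnote example $N=\{(2,\infty,0),(\infty,\infty,0)\}$, which lies in $\PTIME$ only after the first vector is discarded.

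First I would normalize. Invoking Proposition~\ref{prop:infinite_then_N} I replace every infinite unary part $U_j^{(i)}$ by $\{a_j\}^*$, and invoking Proposition~\ref{prop:vector_rep_incomparable} I discard every non-maximal vector; both steps preserve the problem up to $\equiv_m^{\log}$. Hence it suffices to classify the problem for an antichain $N$ in which each $\infty$-entry literally denotes $\{a_j\}^*$. Each single vector $v=(n_1,\ldots,n_k)\in N$ is then analyzed through its corresponding single-term language $U_1\shuffle\cdots\shuffle U_k$, which is itself a legitimate instance. I would record three membership facts about one such term: (a) if every entry of $v$ lies in $\{0,\infty\}$, it is in $\PTIME$ by Proposition~\ref{prop:vector_zero_infty}; (b) if no entry is $\infty$ the term is a finite language, hence in $\PTIME$ by Lemma~\ref{lem:finite}, and if $v$ has exactly one $\infty$ together with at most one further nonzero entry equal to $1$ it is in $\PTIME$ by Proposition~\ref{prop:vector_at_most_one_infty_at_most_one_one}; (c) if $v$ has exactly one $\infty$-entry in some position $j_0$, then the term equals $\{a_{j_0}\}^*\shuffle F_1\shuffle\cdots\shuffle F_{k-1}$ with the $F_m$ the remaining singleton languages, so it is in $\NP$ by Lemma~\ref{lem:gen:inNP}.

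Next I would partition the maximal vectors into \emph{PSPACE-type} (at least two $\infty$-entries and some positive finite entry), \emph{NP-type} (exactly one $\infty$-entry and either a finite entry $\ge 2$ or two positive finite entries), and \emph{easy} (everything else), noting that any vector with $\ge 2$ infinities and a positive finite entry is already PSPACE-type, which forces NP-type vectors to carry exactly one $\infty$. These classes are exhaustive and align with the theorem's hypotheses. In case~(ii) the condition of Proposition~\ref{prop:one_letter_bounded_two_unbounded} names precisely a maximal PSPACE-type vector, giving $\PSPACE$-hardness, while Theorem~\ref{thm:L-contr-sync-PSPACE} gives membership, hence $\PSPACE$-completeness. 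In case~(i) the stated implication forbids PSPACE-type vectors, so every vector is NP-type or easy; by the membership facts each term is in $\NP$, and Lemma~\ref{lem:union} lifts this to $L\text{-Constr-Sync}\in\NP$, while the ``furthermore'' clause is verbatim the hypothesis of Proposition~\ref{prop:single_letter_unbounded_two_bounded} and yields $\NP$-hardness. In case~(iii), being ``all other cases,'' no PSPACE-type vector survives (else~(ii)) and no NP-type vector survives (else the furthermore clause of~(i)), so every vector is easy; fact~(a)/(b) puts each term in $\PTIME$ and Lemma~\ref{lem:union} gives $L\text{-Constr-Sync}\in\PTIME$.

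The main obstacle is not any single deep step but the bookkeeping: I must verify that the partition is genuinely \emph{exhaustive and mutually exclusive} and that hardness lines up with membership with no gaps. Concretely, the delicate point is checking that, after the restriction to maximal vectors in Step~0, the $\NP$-hardness condition can never be triggered by a vector that secretly has two unbounded letters (which would instead be PSPACE-type and route to case~(ii)), and that the easy class is exactly the complement of the two hardness conditions, so that the $\PTIME$ upper bound of case~(iii) is matching. This is precisely where the maximality reduction does its essential work, since without it dominated vectors such as $(2,\infty,0)$ would spuriously suggest $\NP$-hardness where the true complexity is polynomial.
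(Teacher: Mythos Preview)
Your proposal is correct and follows essentially the same approach as the paper's proof: normalize via Propositions~\ref{prop:infinite_then_N} and~\ref{prop:vector_rep_incomparable}, obtain hardness from Propositions~\ref{prop:single_letter_unbounded_two_bounded} and~\ref{prop:one_letter_bounded_two_unbounded}, obtain membership for each single term via Proposition~\ref{prop:vector_zero_infty}, Lemma~\ref{lem:finite}, Proposition~\ref{prop:vector_at_most_one_infty_at_most_one_one}, and Lemma~\ref{lem:gen:inNP}, and glue with Lemma~\ref{lem:union}. Your explicit \emph{PSPACE-type/NP-type/easy} partition and the observation that the structural clause of case~(i) is automatic once case~(ii) is excluded make the trichotomy slightly more transparent than in the paper, but the argument is otherwise the same.
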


The assumption that the vectors in $N$ are incomparable
is essential in the statement, otherwise it would be more complex.
For example, a language
with the vector representation 
$N = \{ (2,\infty,\infty,0),(\infty,\infty,\infty,0),(0,1,\infty,1)\}$
gives an \NP-complete constrained problem. 
However, the formula stated in Theorem~\ref{thm:complete_classification} 
for the \NP-complete case is not fulfilled, as the first vector has two entries
with $\infty$ and another non-zero finite entry.
But for $\{ (\infty,\infty,\infty,0),(0,1,\infty,1) \}$, the maximal vectors,
the conditions in the \NP-complete case above apply.
We give some examples for all cases in Example \ref{ex:classification}.

{\footnotesize  
\begin{example} \label{ex:classification} 
 Let $\Sigma = \{a,b,c\}$ with $a = a_1, b = a_2, c = a_3$.
\begin{itemize} 
\item If $L = \{aa\}\shuffle b(bb)^*$ with $N = \{(2,\infty,0)\}$,
 then $L\textsc{-Constr-Sync}$ is $\NP$-complete.
\item If $L = \{a\}\shuffle b(bb)^*\shuffle \{c\}$
 with $N = \{(1,\infty,1)\}$, then $L\textsc{-Constr-Sync}$ is $\NP$-complete.
\item The constraint language from Example \ref{ex:vector} gives
 a $\NP$-complete problem.
\item If $L = \{aa\}\shuffle b(bb)^* \cup (aaa)^* \shuffle b \shuffle c^*$ with $N = \{(2,\infty,0), (\infty, 1, \infty)\}$, then $L\textsc{-Constr-Sync}$ is $\PSPACE$-complete.
\item If $L = \{a\}\shuffle b(bb)^*$ with $N = \{(1,\infty,0)\}$,
 then $L\textsc{-Constr-Sync} \in \PTIME$.
\item If $L = (aa)^* \shuffle c(ccc)^*$ with $N = (\infty,0,\infty)$, then $L\textsc{-Constr-Sync} \in \PTIME$.
\end{itemize}
\end{example}
}

\subsection{Deciding the Computational Complexity of the Constrained Synchronization Problem} 
\label{sec:decision_problem}


This section addresses the issue of deciding
the computational complexity of $L(\mathcal B)\textsc{-Constr-Sync}$, 
for a constraint automaton such that $L(\mathcal B)$
is commutative.
The next definition is a mild generalization
of a definition first given in \cite{GomezA08}, and used
for state complexity questions in \cite{DBLP:conf/cai/Hoffmann19,Hoffmann20}.

\begin{definition} \label{def:comm_aut}
 Let $\Sigma = \{a_1, \ldots, a_k\}$
 and suppose $\mathcal A = (\Sigma, Q, \delta, s_0, F)$
 is a complete and deterministic automaton accepting a commutative language.
 Set $Q_j = \{ \delta(s_0, a_j^i) : i \ge 0 \}$
 for $j \in \{1,\ldots, k\}$. 
 The automaton $\mathcal C_{\mathcal A} = (\Sigma, Q_1 \times \ldots \times Q_k, \mu, t_0, E)$
 with $t_0 = (s_0, \ldots, s_0)$,
 $$
  \mu( w, (s_1, \ldots, s_k) )
   = (\delta(s_1, a_1^{|w|_{a_1}}), \ldots, \delta(s_k, a_k^{|w|_{a_k}}))
 $$
 and $E = \{ (\delta(t_0, a_1^{|w|_{a_1}}), \ldots , \delta(t_0, a_k^{|w|_{a_k}})) : w \in L(\mathcal A) \}$
 is called the \emph{commutative automaton} constructed from $\mathcal A$.
\end{definition}

If $\mathcal A$ is the minimal automaton of a commutative 
language, it is exactly the definition 
from \cite{GomezA08,DBLP:conf/cai/Hoffmann19,Hoffmann20}. In that case, also in \cite{GomezA08,DBLP:conf/cai/Hoffmann19,Hoffmann20}, it was shown that $L(\mathcal C_{\mathcal A}) = L(\mathcal A)$, and 
that $L(\mathcal A)$ is a union of certain shuffled languages. Both statements still hold
for any automaton $\mathcal A$ such that $L(\mathcal A)$
is commutative.


\begin{restatable}[]{theorem}{commautlanguage}\label{thm:comm_aut_language}
 Let $\Sigma = \{a_1, \ldots, a_k\}$ 
 and suppose $\mathcal A = (\Sigma, Q, \delta, s_0, F)$
 is a complete and deterministic automaton accepting a commutative language.
 Denote by $\mathcal C_{\mathcal A} = (\Sigma, Q_1 \times \ldots \times Q_k, \mu, t_0, E)$ the commutative automaton
 from Definition \ref{def:comm_aut}.
 Then $L(\mathcal C_{\mathcal A}) = L(\mathcal A)$.
\end{restatable}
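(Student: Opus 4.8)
The plan is to reduce the whole statement to a single congruence-type claim and then read off both inclusions. First I would observe that, writing $n_j := |w|_{a_j}$, the definition of $\mu$ gives $\mu(w,t_0) = (\delta(s_0,a_1^{n_1}),\ldots,\delta(s_0,a_k^{n_k}))$ for every $w \in \Sigma^*$ (the $j$-th coordinate starts in $s_0$ and only ever sees $a_j$), so that $E = \{\mu(w,t_0) : w \in L(\mathcal A)\}$ and $L(\mathcal C_{\mathcal A}) = \{w \in \Sigma^* : \mu(w,t_0) \in E\}$. The inclusion $L(\mathcal A) \subseteq L(\mathcal C_{\mathcal A})$ is then immediate, since $w \in L(\mathcal A)$ puts $\mu(w,t_0)$ into $E$ by definition. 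For the converse, if $w \in L(\mathcal C_{\mathcal A})$ then $\mu(w,t_0) = \mu(w',t_0)$ for some $w' \in L(\mathcal A)$, so the whole theorem would follow from the key claim: \emph{whenever $\mu(w,t_0) = \mu(w',t_0)$, we have $w \in L(\mathcal A)$ if and only if $w' \in L(\mathcal A)$.}

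To prove the key claim I would set $n_j = |w|_{a_j}$ and $m_j = |w'|_{a_j}$, so that the hypothesis $\mu(w,t_0)=\mu(w',t_0)$ says precisely $\delta(s_0,a_j^{n_j}) = \delta(s_0,a_j^{m_j})$ for every $j$. Since $L(\mathcal A)$ is commutative, membership depends only on the Parikh vector, so I may replace $w$ and $w'$ by their sorted forms $a_1^{n_1}\cdots a_k^{n_k}$ and $a_1^{m_1}\cdots a_k^{m_k}$. The plan is then to interpolate one coordinate at a time through the words $w_t = a_1^{m_1}\cdots a_t^{m_t}a_{t+1}^{n_{t+1}}\cdots a_k^{n_k}$ (so $w_0$ is the sorted $w$ and $w_k$ the sorted $w'$) and to show $w_{t-1}\in L(\mathcal A)$ iff $w_t \in L(\mathcal A)$ for each $t$; chaining these equivalences closes the claim.

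The heart of the argument, and the step where one must be careful, is this single-coordinate move: the hypothesis $\delta(s_0,a_t^{n_t}) = \delta(s_0,a_t^{m_t})$ is anchored at the start state $s_0$, whereas in $w_{t-1}$ the block $a_t^{n_t}$ is read only after the prefix $a_1^{m_1}\cdots a_{t-1}^{m_{t-1}}$, i.e.\ from an intermediate state where I have no control over the behaviour of $a_t$. Indeed, since $\mathcal A$ need not be minimal it does not commute, and I cannot transport the equality to an arbitrary state. I would resolve this using commutativity of $L(\mathcal A)$ once more: as $w_{t-1}$ and $w_t$ differ only in the exponent of $a_t$, reordering gives $w_{t-1} \in L(\mathcal A)$ iff $a_t^{n_t} z \in L(\mathcal A)$ and $w_t \in L(\mathcal A)$ iff $a_t^{m_t} z \in L(\mathcal A)$ for the common word $z = a_1^{m_1}\cdots a_{t-1}^{m_{t-1}}a_{t+1}^{n_{t+1}}\cdots a_k^{n_k}$. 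Now both distinguished words read the $a_t$-block first, directly from $s_0$, so $\delta(s_0, a_t^{n_t}z) = \delta(\delta(s_0,a_t^{n_t}),z) = \delta(\delta(s_0,a_t^{m_t}),z) = \delta(s_0,a_t^{m_t}z)$, using exactly the hypothesis for $a_t$. Hence the two words are accepted or rejected together, as required.

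I expect no further obstacle, since the argument uses only commutativity of $L(\mathcal A)$ together with the definition of $\mu$, and in particular needs neither minimality of $\mathcal A$ nor the index/period analysis of Lemma~\ref{lem::unary_single_final}. In the minimal case one could argue more strongly that $\delta(s_0,w)=\delta(s_0,w')$, because permutation-equivalent words are Nerode-equivalent and the minimal automaton genuinely commutes; the reordering trick above is precisely what replaces that extra structure for an arbitrary accepting automaton.
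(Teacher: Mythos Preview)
Your proof is correct and follows essentially the same approach as the paper: both handle $L(\mathcal A)\subseteq L(\mathcal C_{\mathcal A})$ trivially from the definition of $E$, and for the converse both interpolate from the Parikh vector of $w'$ (the paper's $u$) to that of $w$ one letter at a time, at each step using commutativity of $L(\mathcal A)$ to rotate the current letter's block to the front so that the hypothesis $\delta(s_0,a_t^{n_t})=\delta(s_0,a_t^{m_t})$, which is anchored at $s_0$, can be applied. Your write-up is in fact a bit more explicit about why the reordering is needed (the non-minimal automaton need not commute), but the underlying argument is the same.
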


The set of words that lead into a single state of the commutative automaton has
a simple form.

\begin{restatable}[]{lemma}{wordsinsinglestate}\label{lem:words_in_single_state} 
 Let $\Sigma = \{a_1, \ldots, a_k\}$
 and suppose $\mathcal A = (\Sigma, Q, \delta, s_0, F)$
 is a complete and deterministic automaton accepting a commutative language.
 Denote by $\mathcal C_{\mathcal A} = (\Sigma, Q_1 \times \ldots \times Q_k, \mu, t_0, E)$ 
 the commutative automaton from Definition \ref{def:comm_aut}.
 Let $s = (s_1, \ldots, s_k) \in Q_1 \times \ldots \times Q_k$
 and set $U_j = \{ u \in \{a_j\}^* \mid \delta(s_0, u) = s_j \}$.
 Then
 $$
  \{ w \in \Sigma^* \mid \mu(t_0, w) = (s_1, \ldots, s_k) \} 
   = U_1 \shuffle \ldots \shuffle U_k.
 $$
\end{restatable}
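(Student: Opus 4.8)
The plan is to unwind both sides to a statement purely about letter counts. By Definition~\ref{def:comm_aut}, applying $\mu$ to the word $w$ from the start state $t_0 = (s_0, \ldots, s_0)$ yields
$$\mu(t_0, w) = (\delta(s_0, a_1^{|w|_{a_1}}), \ldots, \delta(s_0, a_k^{|w|_{a_k}})),$$
so the left-hand set is exactly the set of words $w$ with $\delta(s_0, a_j^{|w|_{a_j}}) = s_j$ for every $j \in \{1, \ldots, k\}$. Note that this characterization, and hence the whole lemma, does not actually use that $L(\mathcal A)$ is commutative; it is a purely structural property of $\mathcal C_{\mathcal A}$, whose transitions depend only on the letter counts of the input.

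The key combinatorial observation is that for unary words $u_j \in \{a_j\}^*$ over the pairwise distinct letters $a_1, \ldots, a_k$, the shuffle $u_1 \shuffle \ldots \shuffle u_k$ is exactly the set of all $w \in \Sigma^*$ with $|w|_{a_j} = |u_j|$ for all $j$. Indeed, any interleaving of the $u_j$ has the prescribed number of each letter; conversely, any word with these letter counts is obtained by placing the $|u_j|$ copies of $a_j$ into their respective positions, which is a valid interleaving precisely because the one-letter alphabets of the $u_j$ are disjoint. Extending this to languages, $U_1 \shuffle \ldots \shuffle U_k$ is the set of words $w$ such that $a_j^{|w|_{a_j}} \in U_j$ for every $j$, taking $u_j = a_j^{|w|_{a_j}}$ as the unique unary projection of $w$ onto $a_j$.

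With these two reformulations the equality is immediate. For $\subseteq$, take $w$ with $\delta(s_0, a_j^{|w|_{a_j}}) = s_j$ for all $j$; then $a_j^{|w|_{a_j}} \in U_j$ by definition of $U_j$, so $w \in U_1 \shuffle \ldots \shuffle U_k$ by the observation above. For $\supseteq$, take $w \in U_1 \shuffle \ldots \shuffle U_k$; then $a_j^{|w|_{a_j}} \in U_j$, i.e.\ $\delta(s_0, a_j^{|w|_{a_j}}) = s_j$, whence $\mu(t_0, w) = (s_1, \ldots, s_k)$.

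Since both directions reduce to the same letter-count condition, the only real work is the combinatorial characterization of the shuffle of unary words over disjoint letters; everything else is bookkeeping over Definition~\ref{def:comm_aut}. I expect that characterization — in particular the argument that disjointness of the single-letter alphabets turns every count-respecting word into a genuine interleaving — to be the one step deserving a line or two of justification, with the remaining steps being direct unfoldings of the definitions.
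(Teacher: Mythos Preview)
Your proof is correct and follows essentially the same approach as the paper: both reduce each side to the letter-count condition $\delta(s_0, a_j^{|w|_{a_j}}) = s_j$ for all $j$, using the definition of $\mu$ on one side and the characterization of a shuffle of unary words over disjoint letters on the other. Your version is slightly more explicit about the shuffle characterization and correctly observes that commutativity of $L(\mathcal A)$ is not actually needed, but the underlying argument is the same.
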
 

{\footnotesize  
\begin{example}
 Note that the form from Lemma \ref{lem:words_in_single_state} need not hold for some arbitrary automaton.
 For example, let $\Sigma = \{a,b\}$ and $L = \Sigma^+$.
 Then a minimal automaton has two states with a single accepting state, 
 and the commutative automaton derived from it has four states, with three accepting states.
 We have $L = a^+ \cup b^+ \cup a^+ \shuffle b^+$.
\end{example}}

As the language of any deterministic automaton could be written as a disjoint union
of languages which lead into a single final state, the next is implied.

\begin{corollary}\label{cor:shuffle_union_form}
 Let $\Sigma = \{a_1, \ldots, a_k\}$
 and suppose $\mathcal A = (\Sigma, Q, \delta, s_0, F)$
 is a complete and deterministic automaton accepting a commutative language.
 Denote by $\mathcal C_{\mathcal A} = (\Sigma, Q_1 \times \ldots \times Q_k, \mu, t_0, E)$ 
 the commutative automaton from Definition \ref{def:comm_aut}.
 Suppose $E = \{ (s_1^{(l)}, \ldots, s_k^{(l)}) \mid l \in \{1,\ldots, m\} \}$
 for some $m \ge 0$.
 Set\footnote{If we start with the minimal automaton, then these are the same sets $U_j^{(l)}$ as introduced in \cite{Hoffmann20}.}
 $U_j^{(l)} = \{ u \in \{a_j\}^* \mid \delta(s_0, u) = s_j^{(l)} \}$
 for $l \in \{1,\ldots, m\}$ and $j \in \{1,\ldots, k\}$.
 Then
 \begin{equation}\label{eqn:shuffle_union_form}
  L(\mathcal A) = \bigcup_{l=1}^m  U_1^{(l)} \shuffle \ldots \shuffle U_k^{(l)}.
 \end{equation}
\end{corollary}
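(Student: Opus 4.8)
The plan is to assemble the result directly from Theorem~\ref{thm:comm_aut_language} and Lemma~\ref{lem:words_in_single_state}, as the preamble to the corollary already indicates. First I would invoke Theorem~\ref{thm:comm_aut_language} to replace $L(\mathcal A)$ by $L(\mathcal C_{\mathcal A})$, so that it suffices to describe the language accepted by the commutative automaton. Since $\mathcal C_{\mathcal A}$ is deterministic and complete, every input word $w$ drives the start state $t_0$ to exactly one state $\mu(t_0, w)$, and $w$ is accepted precisely when that state lies in $E$. Therefore $L(\mathcal C_{\mathcal A})$ decomposes as the union, over the accepting states $(s_1^{(l)}, \ldots, s_k^{(l)}) \in E$ with $l \in \{1,\ldots, m\}$, of the preimage sets $\{\, w \in \Sigma^* \mid \mu(t_0, w) = (s_1^{(l)}, \ldots, s_k^{(l)}) \,\}$; by determinism these sets are in fact pairwise disjoint.

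Next I would apply Lemma~\ref{lem:words_in_single_state} to each accepting state $(s_1^{(l)}, \ldots, s_k^{(l)})$ separately. That lemma identifies the preimage $\{\, w \in \Sigma^* \mid \mu(t_0, w) = (s_1^{(l)}, \ldots, s_k^{(l)}) \,\}$ with the shuffle $U_1^{(l)} \shuffle \ldots \shuffle U_k^{(l)}$, where $U_j^{(l)} = \{\, u \in \{a_j\}^* \mid \delta(s_0, u) = s_j^{(l)} \,\}$ is exactly the set fixed in the statement of the corollary. Substituting this description into the union over $l \in \{1,\ldots, m\}$ obtained in the previous step then yields Equation~\eqref{eqn:shuffle_union_form}.

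The only point requiring any care is that the decomposition of $L(\mathcal C_{\mathcal A})$ by accepting state is a genuine decomposition of the accepted words, which rests on the determinism and completeness of $\mathcal C_{\mathcal A}$; this is already guaranteed by the construction in Definition~\ref{def:comm_aut}. Beyond this bookkeeping, no real obstacle arises: the corollary is a direct packaging of the two preceding results, with Theorem~\ref{thm:comm_aut_language} transferring the problem to the commutative automaton and Lemma~\ref{lem:words_in_single_state} supplying the shuffle form of each single-state preimage.
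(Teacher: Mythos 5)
Your proof is correct and follows exactly the paper's intended route: the paper derives this corollary from Theorem~\ref{thm:comm_aut_language} and Lemma~\ref{lem:words_in_single_state} via the observation that the language of a deterministic complete automaton is the disjoint union of the preimages of its final states. Your additional remark about determinism and completeness guaranteeing a genuine decomposition is the same bookkeeping the paper relies on implicitly.
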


With these notions, we can derive a decision procedure. First construct the
commutative automaton. Then derive a representation as given in Equation \eqref{eqn:shuffle_union_form}.
Use this representation to compute a vector representation according to Definition \ref{def:constraint_vector}.
With the help of Theorem \ref{thm:complete_classification}, from such a vector representation
the computational complexity could be read off.

\begin{restatable}[]{theorem}{decisionprocedure}\label{thm:decision_procedure} 
    Let $\Sigma = \{a_1, \ldots, a_k\}$ be a fixed alphabet.
    For a given 
    (partial) automaton $\mathcal B = (\Sigma, P, \mu, p_0, F)$
    accepting a commutative language,
    the computational complexity of $L(\mathcal B)\textsc{-Constr-Sync}$
    could be decided in polynomial time.
\end{restatable}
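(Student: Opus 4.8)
The plan is to turn the constructive results of this section into an explicit algorithm and then verify that every step runs in time polynomial in $|P|$, using crucially that the alphabet $\Sigma$, and hence $k$, is fixed. First I would make $\mathcal B$ complete by adjoining a single absorbing non-final sink state and extending $\mu$ accordingly; this does not change $L(\mathcal B)$, so the result $\mathcal B'$ is still a complete deterministic automaton accepting the same commutative language, which makes Definition~\ref{def:comm_aut} and Theorem~\ref{thm:comm_aut_language} applicable. I would then build the commutative automaton $\mathcal C_{\mathcal B'}$: for each letter $a_j$ I trace the states $\mu(p_0, a_j^0), \mu(p_0, a_j^1), \ldots$ until the first repetition, which after at most $|P|+1$ steps yields the unary component $Q_j$ together with its index $i_j$ and period $p_j$. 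The product state set $Q_1 \times \cdots \times Q_k$ has at most $(|P|+1)^k$ elements, and since each coordinate evolves independently through the already computed unary structure, $\mathcal C_{\mathcal B'}$ can be written down explicitly.

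Next I would determine the final set $E$. Every product state $(s_1,\ldots,s_k)$ is reachable, by reading $a_1^{m_1}\cdots a_k^{m_k}$ with $\mu(p_0,a_j^{m_j})=s_j$, and by Lemma~\ref{lem:words_in_single_state} together with $L(\mathcal C_{\mathcal B'}) = L(\mathcal B')$ all words reaching it lie simultaneously in or out of $L(\mathcal B)$. Hence membership $(s_1,\ldots,s_k)\in E$ can be decided by forming one representative word $a_1^{m_1}\cdots a_k^{m_k}$ of length at most $k(|P|+1)$ and simulating $\mathcal B'$ on it; ranging over all states this costs polynomial time. By Corollary~\ref{cor:shuffle_union_form} the accepting states $s^{(l)}=(s_1^{(l)},\ldots,s_k^{(l)})$ of $\mathcal C_{\mathcal B'}$ directly furnish the decomposition $L(\mathcal B) = \bigcup_l U_1^{(l)}\shuffle\cdots\shuffle U_k^{(l)}$, where each $U_j^{(l)} = \{u\in\{a_j\}^* : \mu(p_0,u)=s_j^{(l)}\}$ is the nonempty unary language accepted by the $j$-th component with single final state $s_j^{(l)}$, so Lemma~\ref{lem::unary_single_final} applies to every part.

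From this decomposition the vector set $N$ of Definition~\ref{def:constraint_vector} is read off coordinatewise: the entry $n_j^{(l)}$ equals $\infty$ exactly when $s_j^{(l)}$ lies on the cycle of the $j$-th component (so $U_j^{(l)}$ is infinite), and otherwise $s_j^{(l)}$ sits on the tail at a unique position $m<i_j$, giving $n_j^{(l)}=m$; both cases are decided in constant time from the precomputed index, period and position. I would then discard non-maximal vectors by pairwise comparison, which is justified by Proposition~\ref{prop:vector_rep_incomparable} and is compatible with the infinite-entry normalization of Proposition~\ref{prop:infinite_then_N}, leaving an incomparable set $N$. Finally I would test the cases of Theorem~\ref{thm:complete_classification} against $N$: the problem is $\PSPACE$-complete if some vector carries at least two $\infty$ entries together with a finite nonzero entry (the hypothesis of Proposition~\ref{prop:one_letter_bounded_two_unbounded}); otherwise it is $\NP$-complete if some vector carries an $\infty$ entry together with either a finite entry at least $2$ or two finite nonzero entries (the hypothesis of Proposition~\ref{prop:single_letter_unbounded_two_bounded}, the competing formula of case~(i) holding automatically once the $\PSPACE$ case is excluded); and in every remaining case the problem lies in $\PTIME$. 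Each test is a single scan over $N$.

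The correctness of the output is immediate from Theorem~\ref{thm:complete_classification}, so the substance of the argument is entirely the running-time bound. The main obstacle is the size of the commutative automaton, whose state set $Q_1\times\cdots\times Q_k$ is exponential in the number of letters; this is exactly why the statement fixes $\Sigma$, so that $(|P|+1)^k$ is polynomial in $|P|$. The only other point requiring care is the computation of $E$, which I would handle not by inspecting the final set of $\mathcal B'$ directly but by exploiting that, via Lemma~\ref{lem:words_in_single_state}, membership in $L(\mathcal B)$ is constant on each coordinate fibre, so that a single representative word per product state suffices.
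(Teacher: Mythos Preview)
Your proposal is correct and follows essentially the same route as the paper: complete $\mathcal B$, build the commutative automaton $\mathcal C_{\mathcal B'}$ of size at most $|P|^k$, read off the shuffle decomposition via Corollary~\ref{cor:shuffle_union_form}, form the vector set $N$, and then apply the case distinction of Theorem~\ref{thm:complete_classification}. You supply more implementation detail than the paper does---in particular your explicit computation of $E$ by evaluating a single representative word per product state, and the explicit pruning to maximal vectors before testing the cases---but these are refinements of the same argument rather than a different one.
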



\vspace*{-0.4cm}

\section{Conclusion}

We have looked at the Constrained Synchronization Problem~\ref{def:problem_L-constr_Sync} for commutative regular constraint languages, thereby
continuing the investigation started in \cite{DBLP:conf/mfcs/FernauGHHVW19}.
The complexity landscape for regular commutative constraint languages is completely understood.
Only the complexity classes $\PTIME$, $\NP$ and $\PSPACE$ arise, and
we have given conditions for $\PTIME$, $\NP$-complete and $\PSPACE$-complete problems.
In \cite{DBLP:conf/mfcs/FernauGHHVW19} the questions was raised if we can find constraint languages
that give other levels of the polynomial time hierarchy. At least for commutative regular languages
this is not the case. Lastly, we have given a procedure to decide the computational complexity
of $L(\mathcal B)\textsc{-Constr-Sync}$, for a given automaton $\mathcal B$ accepting a commutative language.

\section*{Acknowledgement}
{
\footnotesize  
I thank Prof. Dr.
Mikhail V. Volkov for suggesting the problem of constrained synchronization
during the workshop `Modern Complexity Aspects of Formal Languages' that took place  at Trier University  11.--15.\ February, 2019.
The financial support of this workshop by the DFG-funded project FE560/9-1 is gratefully acknowledged.
I thank my supervisor, Prof. Dr. Henning Fernau, for accepting me as I am, do not judge me on, and always listening to, my sometimes dumb musings; and for giving valuable feedback, discussions and research suggestions concerning the content of this article. }

\bibliographystyle{splncs04}
\bibliography{ms}

\clearpage
\section{Appendix}
Here we collect some proofs not given in the main text.
For establishing  some of our results,
we need the following computational problem taken from \cite{BerlinkovFS18}, which is a
\PSPACE-complete problem for at least binary alphabets, also see \cite{Rys83,San2005}.

\begin{decproblem}\label{def:sync-into-subset}
  \problemtitle{\textsc{Sync-Into-Subset}}
  \probleminput{Det. complete semi-automaton $\mathcal A = (\Sigma, Q, \delta)$ and $S \subseteq Q$.}
  \problemquestion{Is there a word $w \in \Sigma^*$ with $\delta(Q, w) \subseteq S$?}
\end{decproblem}

\begin{remark} \ref{def:sync-into-subset}
	The terminology is not homogeneous in the literature. For instance, \textsc{Sync-Into-Subset} 
	has a different name in \cite{BerlinkovFS18} and in \cite{Rys83}.
\end{remark}

We will also need the next problem from \cite{Koz77}, which is $\PSPACE$-complete in general, but $\NP$-complete
for unary automata, see \cite{fernau2017problems}.

\begin{decproblem}\label{def:problem_AutInt}
  \problemtitle{\textsc{Intersection-Non-Emptiness}}
  \probleminput{Deterministic complete automata $\mathcal A_1$, $\mathcal A_2$, \ldots, $\mathcal A_k$.}
  \problemquestion{Is there a word accepted by them all?}
\end{decproblem}

For some semi-automaton (or DFA) with state set $Q$ and transition function $\delta : Q \times \Sigma \to Q$,
a state $q$ is called a \emph{sink state}, if for all $x \in \Sigma$ we have $\delta(q,x) = q$.

\subsection{Proof of Proposition~\ref{lem:gen:inNP} (See page~\pageref{lem:gen:inNP})}
\lemgeninNP*
\begin{proof} 
 
 As finite languages are regular, and the shuffle operation preserves regular
 languages, the language $L$ is regular.
 Let $\mathcal B = (\Sigma, P, \mu, p_0, F)$
 be some partial automaton with $L(\mathcal B) = L$.
 First note that no final state could be a sink state,
 as then other letters than $a$ could appear infinitely often.
 Further, we can assume for each state $p \in P$
 we have some $u \in \Sigma^*$ with $\mu(p, u) \in F$.
 For otherwise we could drop this state and all transitions to it and get
 another partial automaton that still accepts the same language.
 Also we can assume that each state is reachable, i.e.,
 for $p \in P$ we have $u \in \Sigma$ with $\mu(p_0, u) = p$.
 Now suppose for $p \in P$ that $L(\mathcal B_{p,\{p\}})$
 is infinite. Choose $u,v \in \Sigma^*$
 with $\mu(p_0, u) = p$ and $\mu(p, v) \in F$.
 Then if $w \in L(\mathcal B_{p, \{p\}})$
 we have $uw^*v \subseteq L(\mathcal B)$.
 This gives $w \subseteq \{a\}^*$,
 as otherwise, if $|w|_b > 0$ for some $b \in\Sigma\setminus\{a\}$, then
 for each $n > 0$ we would have $|uw^nv|_b > n$.
 But by Definition of $L$ every letter distinct from $a$
 could only appear a bounded number of times. $\qed$
\end{proof}

\subsection{Proof of Proposition~\ref{prop:infinite_then_N} (See page~\pageref{prop:infinite_then_N})}
\infinitethenN*
\begin{proof} 
 Notation as in the statement of the proposition.
 Because $L \subseteq L'$ one direction is clear.
 Conversely
 suppose we have some synchronizing word $w \in L'$
 and assume $w \in V_1^{(i)} \shuffle \ldots \shuffle V_k^{(i)}$.
 If $i \ne i_0$, then as $V_j^{(i)} = U_j^{(i)}$
 for $j \in \{1,\ldots, k\}$ we have $w \in L$.
 So suppose $i = i_0$.
 As $U_{j_0}^{(i_0)}$
 is infinite, we have some $m \ge 0$
 such that $a_{j_0}^{|w|_{a_{j_0} + m}} \in U_{j_0}^{(i_0)}$.
 This gives
 $$ 
  w a_{j_0}^m \in U_{1}^{(i_0)} \shuffle \ldots \shuffle U_k^{(i_0)}
 $$
 as $a_j^{|w|_{a_j}} \in U_j^{(i_0)} = V_{j}^{(i_0)}$
 for $j \in \{1,\ldots, k\} \setminus \{j_0\}$.
 Hence $w \in L$ and by Theorem \ref{thm:add-stuff} the claim follows. $\qed$
\end{proof}

\subsection{Proof of Proposition~\ref{prop:vector_rep_incomparable} (See page~\pageref{prop:vector_rep_incomparable})}
\vectorrepincomparable*
\begin{proof} 
 Notation as in the statement of the proposition.
 Suppose we have some synchronizing word $w \in L$.
 If $w \in U_1^{(i)} \shuffle \ldots \shuffle U_k^{(i)}$
 with $i \ne i_0$, then also $w \in L'$.
 So suppose $w \in U_1^{(i_0)} \shuffle \ldots \shuffle U_k^{(i_0)}$.
 Let $y = (y_1^{(i_1)}, \ldots, y_k^{(i_1)})$
 with $i_1 \in \{1,\ldots, n\}\setminus \{i_0\}$
 and corresponding part $U_1^{(i_1)} \shuffle \ldots \shuffle U_k^{(i_1)} \subseteq L'$.
 As $x \le y$ for each $a_j^{|w|_{a_j}} \in U_j^{(i_0)}$
 with $j \in \{1,\ldots, k\}$
 we find $m_j \ge 0$ such that $a_j^{|w|_{a_j}} a_j^{m_j} \in U_j^{(i_1)}$.
 Hence $w a_1^{m_1} \cdots a_1^{m_k} \in U_1^{(i_1)} \shuffle \ldots \shuffle U_k^{(i_1)} \subseteq L'$  
 and by Theorem \ref{thm:add-stuff} the claim follows.~$\qed$
\end{proof}

\subsection{Proof of Lemma~\ref{lem:union} (See page~\pageref{lem:union})}
\lemunion*
\begin{proof} 
 Notation as in the statement. 
 The proof for $\mathcal X = \PTIME$
 works by checking in polynomial time all the languages $L_i$ in order, 
 which is a polynomial time operation\footnote{Actually, setting up a machine
 that runs a fixed number of other machines is a constant time operations in itself,
 as soon as one machine ends, enter the starting configuration of the next and so on.
 Hence essentially only the running time of the individual machines determines
 the total running time or space requirements. And here the language $L$
 and hence the value $n$ is part of the fixed constraint language.}.
 The same argument  gives the claim for $\mathcal X = \NP$. 
 This does not use nondeterminism, alterantively
 we could use nondeterminism by guessing $1 \le i \le n$
 first, and then checking for synchronizability in $L_i$.
 For $\mathcal X = \PSPACE$
 the same procedure of checking the languages $L_i$
 in order will work, as running a machine for each $L_i$
 one after another only needs a constant amount of extra instructions, 
 and as each machine only needs polynomial space the the total procedure
 will only use polynomial space. Alternatively
 we can use $\NPSPACE = \PSPACE$ by Savitch's Theorem \cite{Savitch70}
 and guess the language $L_i$. $\qed$
\end{proof}

\begin{lemma}\label{lem:cycle_states}
 Let $\mathcal A = (\Sigma, Q, \delta)$ be a unary semi-automaton
 with $\Sigma = \{a\}$. Then the set $T \subseteq Q$ of states appearing
 on some cycle of $\mathcal A$ is characterized
 by being maximal with the condition $\delta(T, a) = T$.
\end{lemma}
\begin{proof} 
 First some general remarks. The automaton graph of a unary automaton
 is the functional graph of the function given by the 
 single letter seen as a transformation on the states. Such graphs
 are sometimes called directed maximal pseudoforests, and they consists of cycles, and directed paths that must all end in some cycle. In \cite{DBLP:conf/mfcs/FernauGHHVW19}, these
 where also called sun-structures. 
 Note that for each state $q \in Q$, the state $\delta(q, a^{|Q|-1})$
 must always lie on some cycle of the mentioned graph, by the  pigeonhole principle.
 Also, if $q\in Q$ is a state from some cycle, then the state $\delta(q, a^i)$
 for some $i \ge 0$ is also contained on the same cycle. Hence, if $T$ denotes
 the set of all states on the cycles, we have $\delta(T, a) \subseteq T$.
 But, also conversely, if $q \in Q$ is a state on some cycle, it is the preimage
 of the direct predecessor in the cycle, hence $T \subseteq \delta(T, a)$.
 But, the condition $\delta(T, a) = T$ implies
 that only cycle states are contained in $T$.
 To see this, let $T_0 = \{ q \in Q \mid q\notin \delta(Q, a) \}$.
 %
 As cycle states are mapped to cycle states, if $q \in Q$
 is not on some cycle, and $q \notin T_0$, all
 states $S$ with $\{q\} = \delta(S, a)$ are also not on any cycle.
 If $\delta(T, a) = T$, then $a$ acts surjective on this set,
 hence $T \cap T_0 = \emptyset$. Suppose $q \in T$ is some state not on any
 cycle, and chose $S_i \subseteq Q$ maximal with $\{q\} = \delta(S_i, a^i)$.
 By assumption $T \cap S_i \ne \emptyset$ for each $i \ge 0$,
 as $T\cap S_1 \ne \emptyset$, and then arguing inductively for all $i \ge 0$.
 By finiteness, we must have $q \in S_j \cap S_i$ with $j > i$,
 but this implies $\delta(q,a^{i+j}) = \delta(q, a^i)$, i.e.,
 the state $\delta(q, a^i)$ is contained in some cycle. A contradiction,
 hence $T$ could not contain any states not on some cycle.
 Lastly, adding a cycle state $q$ to $T$, and with it the whole cycle $\{ \delta(q, a^i) \mid i \ge 0 \}$,
 the resulting set still obeys the equation. Hence if it is already maximal, it must contain
 every cycle state.
\end{proof}

\subsection{Proof of Lemma~\ref{lem:length_sync_from_word_unary} (See page~\pageref{lem:length_sync_from_word_unary})}
\lengthsyncfromwordunary*
\begin{proof} 
 Suppose $|\delta(S, a^k)| = 1$ for some $k \ge 0$.
 Choose $T \subseteq Q$ maximal such that $\delta(T, a) = T$, by Lemma~\ref{lem:cycle_states} precisely those states
 on the cycles of $\mathcal A$.
 Set $R := \delta(S, a^{|Q|-1})$.
 We have $R \subseteq T$,
 as for any $q \in S$  by finiteness
 $$\delta(q, a^{|Q|}) \in \{ q, \delta(q, a), \ldots, \delta(q, a^{|Q|-1}) \},$$
 which implies we reached some cycle. As the letter $a$ acts as a permutation
 on the set $T$ we have $|R| = |\delta(R, a^i)|$
 for each $i \ge 0$. Hence we must have $|R| = 1$. $\qed$
\end{proof}

\subsection{Proof of Proposition~\ref{prop:vector_zero_infty} (See page~\pageref{prop:vector_zero_infty})}
\vectorzeroinfty*
\begin{proof} 
 By Proposition \ref{prop:infinite_then_N},
 we can assume that if $U_{j}^{(i)}$ is infinite 
 with $i \in \{1,\ldots,n\}$ and $j \in \{1,\ldots, k\}$
 we have $U_{j}^{(i)} = \{a_j\}^*$. By assumption, every letter
 in $U_1^{(i)} \shuffle \ldots \shuffle U_k^{(i)}$ either appears not at all, or infinitely often, which
 by the above means without any restriction.
 Hence, 
 $$
   U_1^{(i)} \shuffle \ldots \shuffle U_k^{(i)} = \Gamma^*
 $$
 for some\footnote{Note
 that $U_1^{(i)} \shuffle \ldots \shuffle U_k^{(i)} = \{\varepsilon\}$
 is possible, which corresponds to the vector $(0,\ldots,0)$.
 But this is covered by $\Gamma = \emptyset$, as by definition $\emptyset^* = \{\varepsilon\}$.} 
 $\Gamma \subseteq \Sigma$. The constrained synchronization problem
 for each single language $\Gamma^*$ could be solved in polynomial time.
 Just ignore all transitions by letters in $\Sigma \setminus \Gamma$
 of any input semi-automaton. The resulting unconstrained
 synchronization problem could then be solved in polynomial time
 by Theorem \ref{thm:unrestricted_sync_poly_time}.
 By Lemma \ref{lem:union}
 the original problem could be solved in polynomial time. $\qed$
\end{proof}

\subsection{Proof of Proposition~\ref{prop:vector_at_most_one_infty_at_most_one_one} (See page~\pageref{prop:vector_at_most_one_infty_at_most_one_one})}
\vectoratmostoneinftyatmostoneone* 
\begin{proof}
 By Proposition \ref{prop:infinite_then_N},
 we can assume that if $U_{j}^{(i)}$ is infinite 
 with $i \in \{1,\ldots,n\}$ and $j \in \{1,\ldots, k\}$
 we have $U_{j}^{(i)} = \{a_j\}^*$.
 By Lemma  \ref{lem:union}
 we can consider a single language of the 
 form $U_1^{(i)} \shuffle \ldots \shuffle U_k^{(i)}$.
 If in the corresponding vector no $\infty$ appears, this language
 is finite. This case is solvable in polynomial time by Lemma \ref{lem:finite}.
 If only a single entry equals $\infty$, and all others are zero,
 then this is solvable in polynomial time by Proposition \ref{prop:vector_zero_infty}.
 So assume we have $j_0, j_1 \in \{1,\ldots, k\}$
 with $U_{j_0}^{(i)} = \{a_{j_0}\}^*$,
 $U_{j_1}^{(i)} = \{a_{j_1}\}$
 and $U_j^{(i)} = \{\varepsilon\}$
 for $j \in \{1,\ldots, k\}\setminus\{j_0, j_1\}$.
 
 Let $\mathcal A = (\Sigma, Q, \delta)$ be a semi-automaton.
 By the constrained language, only the letters $a_{j_0}$
 and $a_{j_1}$ could appear in any synchronizing word.
 For abbreviation we write $a$ for $a_{j_0}$ and $b$ for $a_{j_1}$.
 We can assume $\Sigma = \{a,b\}$
 by ignoring all other transitions.
 The letter $b$ must appear precisely once.
 First, let us only consider the transitions labelled with $a$, i.e., view $\mathcal A$
 as a unary automaton over $\{a\}$.
 
 Choose $T \subseteq Q$ maximal such that $\delta(T, a) = T$, which, by Lemma~\ref{lem:cycle_states}, are precisely those states
 on the $a$-cycles\footnote{These are the cycles that we end up when we 
 start in any state and just read in the letter $a$, i.e., those
 cycles that are exclusively labelled by words from $a^*$.} of $\mathcal A$.
 As $T \subseteq Q$ we have $T \subseteq \delta(Q, a^i)$ for each $i \ge 0$.
 Also, with the same argument as in the proof of Lemma \ref{lem:length_sync_from_word_unary},
 we have $\delta(Q, a^{|Q|-1}) \subseteq T$.
 Taken together $T = \delta(Q, a^{|Q|-1})$, which gives $T = \delta(Q, a^i)$
 for each $i \ge |Q| - 1$.
 So to see if we have any word of the form $a^i b a^j$ 
 with $|\delta(Q, a^i b a^j)| = 1$,
 we just have to test all words with $0 \le i \le |Q|-1$,
 and, by applying Lemma \ref{lem:length_sync_from_word_unary}
 to $S = \delta(Q, a^ib)$, we only have to test $j = |Q| - 1$.
 In total we only need to test $|Q|$ words $ba^{|Q|-1}, aba^{|Q|-1}, \ldots, a^{|Q|-1}ba^{|Q|-1}$
 and each could be done in polynomial time. $\qed$
\end{proof}

\subsection{Proof of Proposition~\ref{prop:single_letter_unbounded_two_bounded} (See page~\pageref{prop:single_letter_unbounded_two_bounded})}
\singleletterunboundedtwobounded*
\begin{proof}
  Notation as in the statement of the Proposition.
 The proofs for both cases (i) and (ii) are very similar.
 We will give a full proof for case (i) and then describe where it
 has to be altered to give a proof for case (ii).
 
  (i) By Proposition~\ref{prop:vector_rep_incomparable},
   we can take the maximal vectors in $N = \{ (n_1^{(i)}, \ldots, n_k^{(i)}
 \mid i \in \{1,\ldots, n\} \}$, which does 
   not change the computational complexity.
   Hence, by taking the maximal vectors, we can assume that the
   vectors in $N$ are incomparable.
 Note that if we take the maximal vectors in $N$, the assumptions
 of the statement do not change. Hence it is unaffected
 by this assumption respectively modification of $N$.
 We write 
 $$
  L = \bigcup_{i=1}^n U_1^{(i)} \shuffle \ldots \shuffle U_k^{(i)}
 $$
 in correspondence with the set $N$ according to Definition~\ref{def:constraint_vector}.
 
 First a rough outline of the reduction that we will construct.
 Please see Figure~\ref{fig:reduction_np} for a drawing of our reduction
 in accordance with the notation that will be introduced in this proof.
 
 \newcommand{\automatonpic}[3]{%
	\begin{scope}
		\node [rectangle,draw,thick,text width=7cm,minimum height=2cm,
	       text centered,rounded corners, fill=white, name = re] {};
	   \node[state] (s) at (-3,0) {$#2$};
	   \node[state] (t) at (-5,0) {$#1$};
	   \node at (-1,0) {\LARGE $#3$};
	      \draw (2,0) ellipse (1cm and 0.75cm);
	      \node at (1.5,0) {$F$};
	 \path[->] (t) edge [above] node {$b$} (s);
	    
	    \node[state] (q1) at (-2,0.5) {};
	    \node[state] (q2) at (0.5,-0.5) {};
	    
	    \path[->] (q1) edge [bend right=30,above] node {$b$} (s);
	    \path[->] (q2) edge [bend left=30, above] node {$b$} (s);
\end{scope}}    

\begin{figure}[htb]
     \centering
    \scalebox{.7}{
\begin{tikzpicture}
 \tikzset{every state/.style={minimum size=1pt},>=stealth'}
 \tikzstyle{dotted}=                  [dash pattern=on \pgflinewidth off 2pt]
   \node (cloud) at (0,0)  {\tikz \automatonpic{t_1}{s_1}{\mathcal A_1};};
   \node (cloud) at (0,2.5) {\tikz \automatonpic{t_2}{s_2}{\mathcal A_2};};
   \node (cloud) at (0,7.5) {\tikz \automatonpic{t_l}{s_l}{\mathcal A_l};};
 
   \draw [line width=2pt, line cap=round, dash pattern=on 0pt off 2\pgflinewidth] (0,4) -- (0,6);
   
 \node[state] (s1) at (6,8) {$r_1$};
 \node[state] (s2) at (8,8) {$r_2$};
 \node[state] (s3) at (10,8) {$r_3$};
 \node[state] (s4) at (12.5,8) {$r_m$};
 \node (s5) at (11,8) {};
 \node (s6) at (11.5,8) {};
 
 \node[state] (p21) at (10.2, 2) {$p_{2,1}$};
 \node[state] (p22) at (10.2, 4) {$p_{2,2}$};
 \node[state] (p2m) at (10.2, 7) {$p_{2,m(2)}$};
 \node (p21a) at (10.2, 5.3) {};
 \node (p21b) at (10.2, 5.7) {};
 \path[->] (p21) edge [left] node {$a_{\lambda(2)}$} (p22);
 \path[->] (p22) edge [left] node {$a_{\lambda(2)}$} (p21a);
 \draw[dashed] (10.2,5.4) -- (10.2,5.8);
 \path[->] (p21b) edge [left] node {$a_{\lambda(2)}$} (p2m);
 \path[->] (p2m)  edge [below] node {$a_{\lambda(2)}$} (s4);

 \node[state] (p31) at (12,2) {$p_{3,1}$};
 \node[state] (p32) at (12,4) {$p_{3,2}$};
 \node[state] (p3m) at (12,6.5) {$p_{3,m(3)}$};
 \node (p31a) at (12, 5.3) {};
 \node (p31b) at (12, 5.7) {};
 \path[->] (p31) edge [left] node {$a_{\lambda(3)}$} (p32);
 \path[->] (p32) edge [left] node {$a_{\lambda(3)}$} (p31a);
 \draw[dashed] (12,5.4) -- (12,5.8);
 \path[->] (p31b) edge [left] node {$a_{\lambda(3)}$} (p3m);
 \path[->] (p3m)  edge [right] node {$a_{\lambda(3)}$} (s4);   
 
 \path[->] (s1) edge [above] node {$b$} (s2); 
 \path[->] (s2) edge [above] node {$b$} (s3);
 \path[->] (s3) edge [above] node {$b$} (s5);
 \path[->] (s6) edge [above] node {$b$} (s4);
 
 \draw[dashed] (11,8) -- (11.5,8);
 
 \path[->] (s4) edge [loop above] node {$a,b,c$} (s4);

 \node[state] (a1) at (3,7.75) {};
 \node[state] (a2) at (3,2.75) {};
 \node[state] (a3) at (3,0.25) {};
 
 \path[->] (a1) edge [bend right=10,above,pos=0.8] node {$b$} (s1);
 \path[->] (a2) edge [bend right=30,above,pos=0.4] node {$b$} (s1);
 \path[->] (a3) edge [bend right=30,above,pos=0.4] node {$b$} (s1);
 
\end{tikzpicture}}

   \caption{Schematic illustration of the reduction in the proof of Proposition \ref{prop:single_letter_unbounded_two_bounded}
  for $\Sigma = \{a,b,c\} = \{a_1, a_2, a_3\}$ and a language of the form 
  $
      L = U_1^{(1)} \shuffle U_2^{(1)} \shuffle U_3^{(1)} \cup 
        U_1^{(2)} \shuffle U_2^{(2)} \shuffle U_3^{(2)} \cup  
        U_1^{(3)} \shuffle U_2^{(3)} \shuffle U_3^{(3)}.
  $
  Here $i_0 = 1$ with $U_1^{(1)} = \{a\}^*$, $U_2^{(1)} = \{b^m\}^*$ 
   and $U_3^{(1)} = \{ c \}$ for $2 \le m < \infty$.
   }
  \label{fig:reduction_np}
\end{figure}
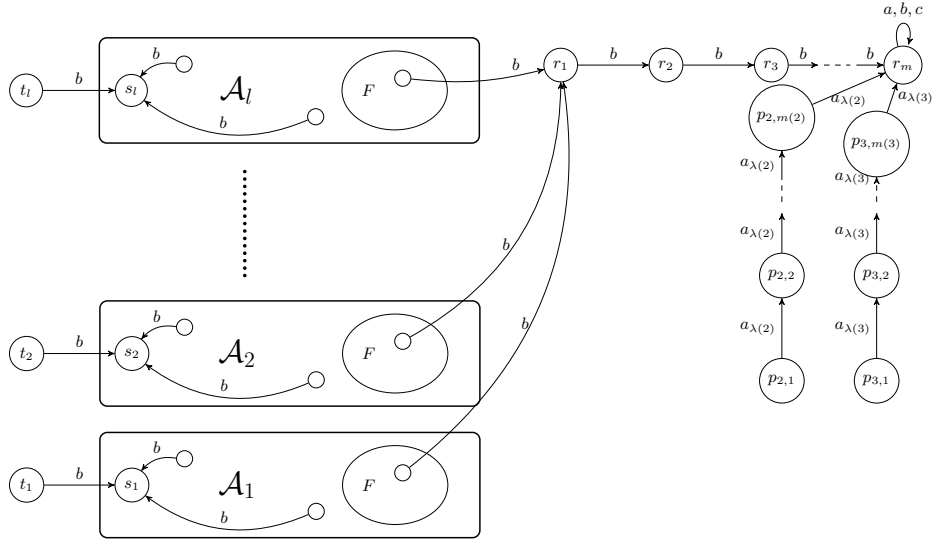

 We will use the problem~\textsc{Intersection-Non-Emptiness}
 from Definition~\ref{def:problem_AutInt}, which is $\NP$-complete
 for unary alphabets.
 We construct a set $P$ of states that guarantees
 we use the set $U_1^{(i_0)} \shuffle \ldots \shuffle U_k^{(i_0)}$
 for permissible synchronizing words. We do this because the property of it having
 one letter that could occur arbitrary often, and one letter to appear a 
 specific, strictly greater than one, number of times, is crucial.
 The letter that is unrestricted is the letter over which the input automata
 are defined, the restricted letter is used to enforce that we have 
 a word that is accepted by them all.

 By incomparability of the vectors in $N$ for 
 each $i \in \{1,\ldots, n\}\setminus\{i_0\}$
 there exists some index $j \in \{1,\ldots, k\}$ such that
 \begin{equation}\label{eqn:incomparable_index_np} 
  n_{j}^{(i_0)} > n_{j}^{(i)}.
 \end{equation}
 We define a function $\lambda : \{1,\ldots, n\}\setminus\{i_0\} \to \{1,\ldots, k\}$
 by choosing such an index $j$, i.e., setting $\lambda(i) = j$ for each $i \in  \{1,\ldots, n\}\setminus\{i_0\}$ with 
 the $j \in \{1,\ldots, k\}$ from Equation~\eqref{eqn:incomparable_index_np}.
 We will use these indices 
 to distinguish the corresponding sets
 used in Definition~\ref{def:constraint_vector}
 $$
  U_1^{(i)} \shuffle \ldots \shuffle U_k^{(i)}
 $$
 from the set
 $$
  U_1^{(i_0)} \shuffle \ldots \shuffle U_k^{(i_0)}.
 $$

 We define a function $m : \{1,\ldots, n\}\setminus \{i_0\} \to (\mathbb N_0 \setminus\{0\})$
  which will be used later
  to single out $U_1^{(i_0)} \shuffle \ldots \shuffle U_k^{(i_0)}$
  by setting\footnote{Note that $n_{\lambda(i)}^{(i_0)} = \infty$ implies $n_{\lambda(i)}^{(i)} \ne \infty$ by Equation~\eqref{eqn:incomparable_index_np},}
  for $i \in \{1,\ldots, n\}\setminus \{i_0\}$
  \begin{equation}\label{eqn:def_mi_np}
   m(i) = \left\{ \begin{array}{ll}
     n_{\lambda(i)}^{(i_0)} & \mbox{ if } n_{\lambda(i)}^{(i_0)} \ne \infty \\ 
     n_{\lambda(i)}^{(i)} + 1 & \mbox{ if } n_{\lambda(i)}^{(i_0)} = \infty.
   \end{array}\right.
  \end{equation}
  For words $w \in \Sigma^*$ 
  with $|w|_{a_{\lambda(i)}} \ge m(i)$ for all $i \in \{1,\ldots, i\}\setminus\{i_0\}$,
  the following holds
  \begin{equation}\label{eqn:single_componet_np_proof}
   w \in L \Leftrightarrow w \in U_1^{(i_0)} \shuffle \ldots \shuffle U_k^{(i_0)}.
  \end{equation}
  As for each $i \in \{1,\ldots, n\}\setminus \{i_0\}$
  we have some $\lambda(i)$ such that $U_{\lambda(i)}^{(i)}$ is finite
  and contains a unique word of length $n_{\lambda(i)}^{(i)}$, and
  $$
   w \notin U_{\lambda(i)}^{(i)}
  $$
  by choice of $|w|_{a_{\lambda(i)}}$.
  Set 
  $$ 
   P = \{ p_{i, 1}, \ldots, p_{i, m(i)} \mid i \in \{1,\ldots, n\} \setminus \{i_0\}\}.
  $$

 By Proposition~\ref{prop:infinite_then_N}
 we can assume that if $U_{j}^{(i)}$ is infinite 
 with $i \in \{1,\ldots,n\}$ and $j \in \{1,\ldots, k\}$
 we have $U_{j}^{(i)} = \{a_j\}^*$.
 In what follows only  the letters
 $a_{j_0}$ and $a_{j_1}$ are essential.
 We denote by $a$ the letter $a_{j_0}$,
 by $b$ the letter $a_{j_1}$.
 We also set $m = n_{j_1}^{(i_0)}$ for abbreviation.
 
 We use the problem~\textsc{Intersection-Non-Emptiness} from Definition \ref{def:problem_AutInt}
 for unary automata, which is NP-complete in this case, for our reduction.
  Let $\mathcal A_1,\ldots, \mathcal A_l$
  be automata with $\mathcal A_i = (\{a\}, Q_i, \delta_i, s_i, F_i)$ 
  for $i \in \{1,\ldots, l\}$ and disjoint state sets.
  Construct a semi-automaton $\mathcal A = (\Sigma, Q, \delta)$ with state set
  $Q = Q_1 \cup \ldots \cup Q_l \cup \{t_1, \ldots, t_l, r_1, \ldots, r_{m-1}\} \cup P$ 
  and transition function 
  $$
   \delta(q, a) = \left\{ 
    \begin{array}{ll}
    \delta_i(q, a)  & \mbox{ if } q \in Q_i, \\
     q              & \mbox{ if } q \in \{t_1, \ldots, t_k, r_1, \ldots, r_{m-1}\}, 
   \end{array}\right.
  $$
  and
  $$
   \delta(q, b) = \left\{ \begin{array}{ll}
    s_i & \mbox{ if } q \in \{t_i\} \cup  S_i \setminus F_i, \\
    r_1 & \mbox{ if } q \in F_i, \\ 
    r_{i+1} & \mbox{ if } q = r_i \mbox{ for } i \in \{1,\ldots, m-2\}, \\ 
    q       & \mbox{ if } q = r_{m-1}. \\ 
    \end{array}\right.
  $$
  For $i \in \{1,\ldots n\} \setminus \{i_0\}$
  and $r \in \{1,\ldots,m(i)\}$ set
  $$
   \delta( p_{i, r}, a_{\lambda(i)} ) 
    = \left\{
    \begin{array}{ll}
      p_{i, r + 1} & \mbox{ if } r < n_{\lambda(i)}^{(i_0)} \\
      r_{m-1}      & \mbox{ if } r = n_{\lambda(i)}^{(i_0)}
    \end{array}
    \right.
  $$
  and $\delta( p_{i, r}, a_j ) = p_{i,r}$ for $j \ne \{1,\ldots, k\}\setminus\{\lambda(i)\}$.
  Lastly for $q \in Q \setminus P$ we
  set $\delta(q, c) = q$ for each $c \in \Sigma \setminus \{a,b\}$. Then our
  automaton is fully specified.

  We argue that our semi-automaton $\mathcal A$ 
  has a synchronizing word 
  in $L$
  if and only if $\bigcap_{i=1}^k L(\mathcal A_i) \ne \emptyset$.
  
  First suppose $a^n \in \bigcap_{i=1}^k L(\mathcal A_i)$.
  Then it is easy to see that $\delta(Q, ba^nb^{m-1}) = \{r_{m-1}\}$.
  We have $b^{m} \in U_{j_1}^{(i_0)}$
  and $a^n \in U_{j_0}^{(i_0)}$.
  For $j \in \{1,\ldots, k\} \setminus \{j_0, j_1\}$
  choose any $u_j \in U_j^{(i_0)}$. Let $u$
  be the concatenation of all these words in any order.
  Then we have $\delta(Q, ba^nb^{m-1}u) = \{r_{m-1}\}$
  and $b a^n b^{m-1} u \in U_1^{(i_0)} \shuffle \ldots \shuffle U_k^{(i_0)} \subseteq L$.

  Conversely assume we have $w \in L$ 
  with $|\delta(Q, w)| = 1$.
  As $r_{m-1}$ is a sink state we have $\delta(Q, w) = \{ r_{m-1} \}$.
  We need one $b$  to leave any state from $\{t_1, \ldots, t_l\}$.
  After this we end up in some state from $\{s_1, \ldots, s_l\} \subseteq Q_1\cup \ldots \cup Q_l$.
  And from those states to get to $r_1$, then $r_2$ and so on until $r_{m - 1}$
  we have to read $m - 1$ additional times the letter $b$.
  Hence, a word that could map any state in $\{t_1, \ldots, t_l\}$
  to $r_{m - 1}$ has to contain at least $m$ many times the letter $b$.
  
  For some $i \in \{1,\ldots, n\}$ we have $w \in U_1^{(i)} \shuffle \ldots \shuffle U_k^{(i)}$.
  Consider the states in $P$. The only
  way to go from $p_{i,1}$ to $r_{m-1}$ for $i \in \{1,\ldots, n\}\setminus\{i_0\}$
  is to read at least $m(i)$ times the letter $a_{\lambda(i)}$.
  Hence $|w|_{a_{\lambda(i)}} \ge m(i)$ and so by Equation \eqref{eqn:single_componet_np_proof}
  we have $i = i_0$. But as $U_{j_1}^{(i)}$ contains a unique word
  of length $m$ and with $b = a_{j_1}$
  we have $|w|_b = m$.
  
  Write $w = u_0 b u_1 b \cdots b u_{m} v$ with $u_i \in (\Sigma \setminus\{b\})^*$ for $i \in \{0,\ldots, m\}$.

  By construction $\{t_1, \ldots, t_k\} \subseteq \delta(Q, u_0)$.
  Hence by definition of the transition function
  $$
   \{s_1, \ldots, s_k, r_2, \ldots, r_{m-1}\} \subseteq \delta(Q, u_0b) \setminus P \subseteq \{s_1, \ldots, s_k, r_1, \ldots, r_{m-1}\}.
  $$
  Note that for any $q \in Q \setminus \{r_1, \ldots, r_{m-1}\}$
  and $u \in (a^* b a^*)^r$ with $r < m-1$ we have
  \begin{equation}\label{eqn:b_count}
   \delta(q, u) \subseteq Q \setminus \{r_{r+1}, \ldots, r_{m-1} \}.
  \end{equation}
  Assume $q = \delta(s_i, u_1) \notin F_i$ for some $i \in \{1,\ldots, l\}$,
  then by Equation \eqref{eqn:b_count} as $\delta(q, b) = s_i$,
  we have 
  $$
  \delta(s_i, u_1 b u_2 b \cdots b u_m) \subseteq Q \setminus \{r_{m-1}\}.
  $$
  Hence $\delta(s_i, u_1) \in F_i$ for $i \in \{1,\ldots, k\}$.
  As by construction of $\mathcal A$ only the letter $a$ and $b$ 
  act non-trivial\footnote{Meaning as non-identity transformations
  on the state set under consideration.} on the state set $Q \setminus P$, $u_1$ does not contain the letter $b$ and no state from $P$ could be entered from any state in $Q \setminus P$,
  in particular not from $s_i$, which implies $\delta(s_i, u) \in Q\setminus P$
  for each prefix of $u$ of $u_1$,
  we have that $\delta(s_i, a^{|u_1|}) = \delta(s_i, u_1)$.
  This gives $a^{|u_1|} \in \bigcap_{i=1}^k L(\mathcal A_i)$.

  (ii) In this case let $a = a_{j_0}, b = a_{j_1}$ and $c = a_{j_2}$.
   Set $m = n_{j_2}^{(i_0)}$.
   We can use essentially the same reduction. The difference is that 
   we use the letter $b$ to reset all automata $\mathcal A_1, \mathcal A_2, \ldots, \mathcal A_l$
   to their initial states.
   Instead of $m-1$ states $r_1, \ldots, r_{m-1}$ we use $m$ states $r_1, \ldots, r_m$,
   and the letter $c$ is used to move from state $r_i$ to state $r_{i+1}$ until we reach
   the final sink state $r_m$. All other letters induce self-loops on the states $r_1, \ldots, r_m$.
   Also inside the automata $\mathcal A_1, \ldots, A_l$ the letter $b$ also moves
   every state to the corresponding start state. The letter $c$ is used to move from any final
   state to the state $r_1$. For non-final states the letter $c$ induces a self-loop.
   With this construction, we could argue similar to case (i) that the thus altered automaton construction
   admits a synchronizing word in the constraint language
   if and only if we have a unary word accepted by all input automata.  $\qed$
\end{proof}

\subsection{Proof of Proposition~\ref{prop:one_letter_bounded_two_unbounded} (See page~\pageref{prop:one_letter_bounded_two_unbounded})}
\oneletterboundedtwounbounded*
\begin{proof} 
 Notation as in the statement of the proposition.
 By Proposition~\ref{prop:vector_rep_incomparable},
   we can take the maximal vectors in $N = \{ (n_1^{(i)}, \ldots, n_k^{(i)}
 \mid i \in \{1,\ldots, n\} \}$, which does 
   not change the computational complexity.
   Hence, by taking the maximal vectors, we can assume that the
   vectors in $N$ are incomparable.
 Note that if we take the maximal vectors in $N$, the assumptions
 of the statement do not change. Hence it is unaffected
 by this assumption respectively modification of $N$.
 We write 
 $$
  L = \bigcup_{i=1}^n U_1^{(i)} \shuffle \ldots \shuffle U_k^{(i)}
 $$
 in correspondence with the set $N$ according to Definition \ref{def:constraint_vector}.

 First, a rough outline of the reduction that we will construct.
 Please see Figure~\ref{fig:reduction_pspace} for a drawing of our reduction
 in accordance with the notation that will be introduced in this proof.
 
 \newcommand{\automatacloudother}[2][.44]{%
	\begin{scope}[#2]
		\node [rectangle,draw,thick,text width=10cm,minimum height=7cm,
		text centered,rounded corners, fill=white, name = re] {};
\end{scope}}    

\newcommand{\setSbar}[2][.35]{%
\begin{scope}[#2]
\pgftransformscale{#1}%
\pgfpathmoveto{\pgfpoint{261 pt}{115 pt}} 
  \pgfpathcurveto{\pgfqpoint{70 pt}{107 pt}}
                 {\pgfqpoint{127 pt}{291 pt}}
                 {\pgfqpoint{180 pt}{260 pt}} 
  \pgfpathcurveto{\pgfqpoint{78 pt}{382 pt}}
                 {\pgfqpoint{381 pt}{445 pt}}
                 {\pgfqpoint{412 pt}{440 pt}}
  \pgfpathcurveto{\pgfqpoint{577 pt}{587 pt}}
                 {\pgfqpoint{698 pt}{488 pt}}
                 {\pgfqpoint{685 pt}{366 pt}}
  \pgfpathcurveto{\pgfqpoint{840 pt}{192 pt}}
                 {\pgfqpoint{610 pt}{157 pt}}
                 {\pgfqpoint{610 pt}{157 pt}}
  \pgfpathcurveto{\pgfqpoint{531 pt}{39 pt}}
                 {\pgfqpoint{298 pt}{51 pt}}
                 {\pgfqpoint{261 pt}{115 pt}}
\pgfusepath{fill,stroke}       
\end{scope}}  

\begin{figure}[htb]
     \centering
    \scalebox{.65}{
\begin{tikzpicture}
 \tikzset{every state/.style={minimum size=1pt},>=stealth'}
 \node (cloud) at (0,0) {\tikz \automatacloudother{fill=gray!0,thick};};
 \node (cloud) at (1,-0.3) {\tikz \setSbar{fill=gray!20,opacity=0.5, thick};};
 
 \node[state] (s1) at (6,3) {$s_1$};
 \node[state] (s2) at (8,3) {$s_2$};
 \node[state] (s3) at (10,3) {$s_3$};
 \node[state] (s4) at (12.5,3) {$S_m$};
 \node (s5) at (11,3) {};
 \node (s6) at (11.5,3) {};
 
 
 \node[state] (p21) at (10.2,-3) {$p_{2,1}$};
 \node[state] (p22) at (10.2,-1) {$p_{2,2}$};
 \node[state] (p2m) at (10.2, 2) {$p_{2,m(2)}$};
 \node (p21a) at (10.2, 0.3) {};
 \node (p21b) at (10.2, 0.7) {};
 \path[->] (p21) edge [left] node {$a_{\lambda(2)}$} (p22);
 \path[->] (p22) edge [left] node {$a_{\lambda(2)}$} (p21a);
 \draw[dashed] (10.2,0.4) -- (10.2,0.8);
 \path[->] (p21b) edge [left] node {$a_{\lambda(2)}$} (p2m);
 \path[->] (p2m)  edge [below] node {$a_{\lambda(2)}$} (s4);

 \node[state] (p31) at (12,-3) {$p_{3,1}$};
 \node[state] (p32) at (12,-1) {$p_{3,2}$};
 \node[state] (p3m) at (12, 1.5) {$p_{3,m(3)}$};
 \node (p31a) at (12, 0.3) {};
 \node (p31b) at (12, 0.7) {};
 \path[->] (p31) edge [left] node {$a_{\lambda(3)}$} (p32);
 \path[->] (p32) edge [left] node {$a_{\lambda(3)}$} (p31a);
 \draw[dashed] (12,0.4) -- (12,0.8);
 \path[->] (p31b) edge [left] node {$a_{\lambda(3)}$} (p3m);
 \path[->] (p3m)  edge [right] node {$a_{\lambda(3)}$} (s4);   
 
 \path[->] (s1) edge [above] node {$c$} (s2); 
 \path[->] (s2) edge [above] node {$c$} (s3);
 \path[->] (s3) edge [above] node {$c$} (s5);
 \path[->] (s6) edge [above] node {$c$} (s4);
 
 \draw[dashed] (11,3) -- (11.5,3);
 
 \node at (-3,1.5) {\LARGE $\mathcal A$};
 \node at (3,1) {\LARGE $S$};
 
 \node[state] (a1) at (-3.5,1) {};
 \node[state] (a2) at (-2,2.4) {};
 \node[state] (a3) at (-4,-1.8) {};
 
 \node[state] (a4) at (0,0) {};
 \node[state] (a5) at (2,1) {};
 \node[state] (a6) at (0.5,-2) {};
 
 \path[->] (a4) edge [bend left=30,above] node {$c$} (s1);
 \path[->] (a5) edge [bend right=15,above,pos=0.6] node {$c$} (s1);
 \path[->] (a6) edge [bend right=30,right,pos=0.8] node {$c$} (s1);
  
 \path[->] (a1) edge [loop below] node {$c$} (a1); 
 \path[->] (a2) edge [loop below] node {$c$} (a2); 
 \path[->] (a3) edge [loop below] node {$c$} (a3); 
  
 \path[->] (s4) edge [loop above] node {$a,b,c$} (s4); 
  
\end{tikzpicture}}
  \caption{Schematic illustration of the reduction in the proof of Proposition \ref{prop:single_letter_unbounded_two_bounded}
  for $\Sigma = \{a,b,c\} = \{a_1, a_2, a_3\}$ and a language of the form 
  $
    L = U_1^{(1)} \shuffle U_2^{(1)} \shuffle U_3^{(1)} \cup 
         U_1^{(2)} \shuffle U_2^{(2)} \shuffle U_3^{(2)} \cup  
         U_1^{(3)} \shuffle U_2^{(3)} \shuffle U_3^{(3)}.
  $
  Here $i_0 = 1$ with $U_1^{(1)} = \{a\}^*$, $U_2^{(1)} = \{b\}^*$ 
  and $U_3^{(1)} = \{ c^m \}$ for $1 \le m < \infty$.
   }
  \label{fig:reduction_pspace}
\end{figure}
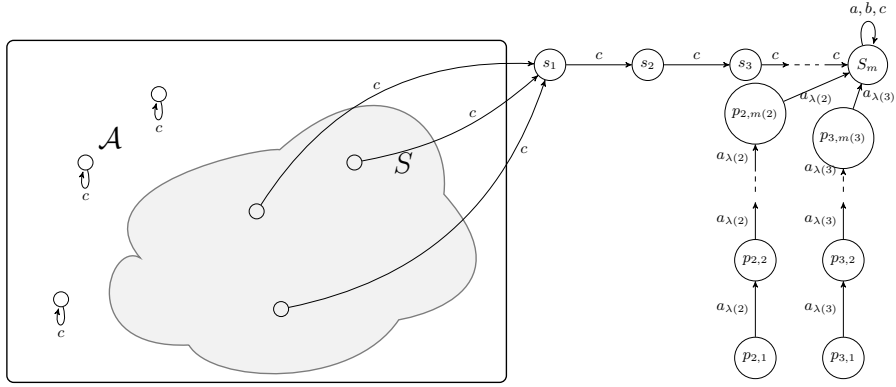

 We will use the problem \textsc{Sync-Into-Subset}
 from Definition \ref{def:sync-into-subset}, which is PSPACE-complete 
 for some fixed binary alphabet.
 We construct a set $P$ of states that guarantees
 we use the set $U_1^{(i_0)} \shuffle \ldots \shuffle U_k^{(i_0)}$
 for permissible synchronizing words. We do this because the property of it having
 two letters that could occur arbitrary often, and one letter is required to appear a specific
 non-zero number of times, is crucial.
 The two letters that are unrestricted are the letters over which some input automaton
 is defined, the restricted letter is used to enforce that we have some word over
 the unrestricted letters that maps all states into some specific set of states.

 By incomparability of the vectors in $N$ for 
 each $i \in \{1,\ldots, n\}\setminus\{i_0\}$
 there exists some index $j \in \{1,\ldots, k\}$ such that
 \begin{equation}\label{eqn:incomparable_index} 
  n_{j}^{(i_0)} > n_{j}^{(i)}.
 \end{equation}
 We define a function $\lambda : \{1,\ldots, n\}\setminus\{i_0\} \to \{1,\ldots, k\}$
 by choosing such an index $j$, i.e., setting $\lambda(i) = j$ for each $i \in  \{1,\ldots, n\}\setminus\{i_0\}$ with 
 the $j \in \{1,\ldots, k\}$ from Equation \eqref{eqn:incomparable_index}.
 We will use these indices 
 to distinguish the corresponding sets
 used in Definition \ref{def:constraint_vector}
 $$
  U_1^{(i)} \shuffle \ldots \shuffle U_k^{(i)}
 $$
 from the set
 $$
  U_1^{(i_0)} \shuffle \ldots \shuffle U_k^{(i_0)}.
 $$

 We also define a function $m : \{1,\ldots, n\}\setminus \{i_0\} \to (\mathbb N_0 \setminus\{0\})$
  which will be used later
  to single out $U_1^{(i_0)} \shuffle \ldots \shuffle U_k^{(i_0)}$
  by setting\footnote{Note that $n_{\lambda(i)}^{(i_0)} = \infty$ implies $n_{\lambda(i)}^{(i)} \ne \infty$
  by Equation~\eqref{eqn:incomparable_index}.}
  for $i \in \{1,\ldots, n\}\setminus \{i_0\}$
  \begin{equation}\label{eqn:def_mi}
   m(i) = \left\{ \begin{array}{ll}
     n_{\lambda(i)}^{(i_0)} & \mbox{ if } n_{\lambda(i)}^{(i_0)} \ne \infty \\ 
     n_{\lambda(i)}^{(i)} + 1 & \mbox{ if } n_{\lambda(i)}^{(i_0)} = \infty.
   \end{array}\right.
  \end{equation}
  For a word $w \in \Sigma^*$ 
  with $|w|_{a_{\lambda(i)}} \ge m(i)$
  for all $i \in \{1,\ldots, i\}\setminus\{i_0\}$,
  the following holds
  \begin{equation}\label{eqn:single_componet_pspace_proof}
   w \in L \Leftrightarrow w \in U_1^{(i_0)} \shuffle \ldots \shuffle U_k^{(i_0)}.
  \end{equation}
  As for each $i \in \{1,\ldots, n\}\setminus \{i_0\}$
  we have some $\lambda(i)$ such that $U_{\lambda(i)}^{(i)}$ is finite
  and contains a unique word of length $n_{\lambda(i)}^{(i)}$, and
  $$
   w \notin U_{\lambda(i)}^{(i)}
  $$
  by choice of $|w|_{a_{\lambda(i)}}$.
  Set 
  $$ 
   P = \{ p_{i, 1}, \ldots, p_{i, m(i)} \mid i \in \{1,\ldots, n\} \setminus \{i_0\}\}.
  $$

 By Proposition \ref{prop:infinite_then_N}
 we can assume that if $U_{j}^{(i)}$ is infinite 
 with $i \in \{1,\ldots,n\}$ and $j \in \{1,\ldots, k\}$
 we have $U_{j}^{(i)} = \{a_j\}^*$.
 In what follows only  the letters
 $a_{j_0}, a_{j_1}$ and $a_{j_2}$ are essential.
 We denote by $a$ the letter $a_{j_0}$,
 by $b$ the letter $a_{j_1}$
 and by $c$ the letter $a_{j_2}$.
 We also set $m = n_{j_2}^{(i_0)}$ for abbreviation.

 Now our reduction from \textsc{Sync-Into-Subset}
 given in Definition \ref{def:sync-into-subset}.
 Set $\Gamma = \{a,b\}$.
 Let $\mathcal A = (\Gamma, Q, \delta)$ be a semi-automaton with non-empty subset $S \subseteq Q$.
 We construct an automaton $\mathcal A' = (\Sigma, Q', \delta')$
 with  $Q' = Q \cup P \cup \{ s_1, \ldots, s_m \}$.
 
 For states $q \in Q' \setminus P$ we set
 $$
  \delta'(q, x) = \left\{
  \begin{array}{ll}
   \delta(q, x) & \mbox{ if } q \in Q \mbox{ and } x \in \Gamma, \\ 
   q            & \mbox{ if } x \in \Sigma \setminus \{a,b,c\},  \\
   q            & \mbox{ if } q \in Q \setminus S \mbox{ and } x = c, \\
   s_1          & \mbox{ if } q \in S \mbox{ and } x = c, \\
   s_{i+1}      & \mbox{ if } q = s_i \mbox{ with } i \in \{1,\ldots,m-1\} \mbox{ and } x = c, \\
   s_m          & \mbox{ if } q = s_m \mbox{ and } x = c, \\
   q            & \mbox{ if } q \in \{s_1,\ldots,s_m\} \mbox{ and } x \in \{a,b\}.
  \end{array}
  \right.
 $$
 and for the states in $P$  with $i \in \{1,\ldots n\} \setminus \{i_0\}$
 and $r \in \{1,\ldots,m(i)\}$ we set
  $$
   \delta( p_{i, r}, a_{\lambda(i)} ) 
    = \left\{
    \begin{array}{ll}
      p_{i, r + 1} & \mbox{ if } r < m(i) \\
      s_{m}    & \mbox{ if } r = m(i) 
    \end{array}
    \right.
  $$
  and $\delta( p_{i, r}, a_j ) = p_{i,r}$ for $j \ne \lambda(i)$.

  We have that $\mathcal A'$ has a synchronizing word $w \in L$
  if and only if $\delta(Q, u) \subseteq S$ for some $u \in \Gamma^*$.
  
  First assume $\delta(Q, u) \subseteq S$  for some $u \in \Gamma^*$.
  Then $\delta'(Q \cup \{s_1, \ldots, s_m\}, uc^m) = \{ s_m \}$. 
  We define $u_j \in \Sigma \setminus \{c\}$ for $j \in \{1,\ldots k\}\setminus \{j_2\}$
  by setting\footnote{Note that $a_j \ne c$ if $j \ne j_2$.}
  $$
   u_j = \left\{ 
   \begin{array}{ll}
    \mbox{any} \in U_j^{(i_0)} & \mbox{ if } j \notin \lambda( \{1, \ldots, n\} \setminus \{i_0\} ), \\
    a_j^{m(i)} \in U_j^{(i_0)} & \mbox{ if } j = \lambda(i) \mbox{ for some } i \in \{1,\ldots, n\}\setminus \{i_0\}
   \end{array} \right.
  $$
  which is well-defined as $\lambda(i) = \lambda(i')$ for $i, i' \in  \{1, \ldots, n\} \setminus \{i_0\} )$
  implies $m(i) = m(i')$ by Equation \eqref{eqn:def_mi}.
  Let $v$ be the concatenation of the $u_j$ in any order
  and set
  $
   w = uc^m v.
  $  
  Then $w \in U_1^{(i_0)} \shuffle \ldots \shuffle U_k^{(i_0)} \subseteq L$. 
  Note that the factors $u_{j_0} \subseteq\{ a \}^*, u_{j_1} \subseteq\{b\}^*$ and $u \in \Gamma^*$
  of $v$ pose no problem here as $U_{j_0}^{(i_0)} = \{a\}^*$
  and $U_{j_1}^{(i_1)} = \{b\}^*$.
  Then by choice of the $u_j$ we have $\delta(P, w) = \{s_m\}$, and as $s_m$ is a sink state $\delta(Q', w) = \{s_m\}$.

  %
  Conversely, assume $\mathcal A'$ has a synchronizing word $w \in L$.
  As $s_m$ is a sink state\footnote{This is 
  a state $q \in Q'$ with $\delta'(q, x) = q$
  for all $x \in \Sigma$.}
  we must have $\delta'(Q', w) = \{s_m\}$.
  Also because $\delta(P, w) = \{s_m\}$ we 
  have $|w|_{a_{\lambda(i)}} \ge m(i)$ for each $i \in \{1,\ldots,n\}\setminus\{i_0\}$.
  So by Equation \eqref{eqn:single_componet_pspace_proof}
  this implies $w \in U_1^{(i_0)} \shuffle \ldots \shuffle U_k^{(i_0)} \subseteq L$.
  Hence as $U_{j_2}^{(i_0)}$ contains a unique word of length $m \ge 1$
  we have $|w|_{c} = m$. Write
  $w = u_0 c u_1 c\cdots c u_m$ with $u_i \in (\Sigma\setminus \{c\})^*$ 
  for $i \in \{1,\ldots, m\}$.
  For any $u \in \Sigma^*$ with $|u|_c < m$ we have
  $$ 
   \delta'(Q, u) \subseteq Q \setminus ( P \cup \{ s_{|u|_c+1}, \ldots, s_m \} ).
  $$
  But we reach $s_m$ so we must have $\delta'(Q, u_0c) \cap Q = \emptyset$,
  for otherwise we would not have enough letters $c$ left to transfer
  any state from $\delta'(Q, u_0c) \cap Q$ to $s_m$.
  The condition $\delta'(Q, u_0c) \cap Q = \emptyset$
  with $|u_0|_c = 0$
  is only possible if $\delta'(Q, u_0) \subseteq S$.
  As for $x \in \Sigma \setminus ( \Gamma \cup \{c\} )$
  we have $\delta(q, x) = q$ for each $q \in Q$, we can remove all these letters
  from $u_0$ to get a new word $u \in \Gamma^*$
  with $\delta(Q, u) = \delta'(Q, u) = \delta'(Q, u_0) \subseteq S$. $\qed$
\end{proof}

\subsection{Proof of Theorem~\ref{thm:complete_classification} (See page~\pageref{thm:complete_classification})}
\completeclassification*
\begin{proof}
 Notation as in the statement of the Theorem. By Proposition \ref{prop:infinite_then_N}
 we can assume that if $U_{j}^{(i)}$ is infinite,
 with $i \in \{1,\ldots,n\}$ and $j \in \{1,\ldots, k\}$,
 we have $U_{j}^{(i)} = \{a_j\}^*$.
 Both Proposition~\ref{prop:single_letter_unbounded_two_bounded}
 and Proposition~\ref{prop:one_letter_bounded_two_unbounded}
 give the corresponding hardness results for case (i) and (ii).
 By Theorem~\ref{thm:L-contr-sync-PSPACE}
 the problem is always in $\PSPACE$.
 This gives case~(ii).
 Suppose case~(i) holds. Beside hardness, we still have to show containment in $\NP$.
 We will show that for each 
 language $U_1^{(i)} \shuffle \ldots \shuffle U_k^{(i)}$ with $i \in \{1,\ldots, n\}$
 the constrained synchronization problem for this language is in $\NP$. 
 By Lemma \ref{lem:union} this would give our claim for case~(i).
 If two different languages $U_{j_0}^{(i)}$, $U_{j_1}^{(i)}$ with $j_0, j_1 \in \{1,\ldots, k\}$
 are infinite, then we can apply Proposition~\ref{prop:vector_zero_infty} by assumption
 from case~(i).
 Otherwise, either the language $U_1^{(i)} \shuffle \ldots \shuffle U_k^{(i)}$ with $i \in \{1,\ldots, k\}$
 is finite, in which case we can apply Lemma~\ref{lem:finite}, or a single language $U_{j_0}^{(i)}$ with $j \in \{1,\ldots, k\}$
 is infinite, in which case we can apply Lemma~\ref{lem:gen:inNP}, by the assumption that infinite
 languages $U_j^{(i)}$ with $i \in \{1,\ldots,n\}$ and $j \in \{1,\ldots,k\}$ equal $\{a_j\}^*$.
 Hence for all these languages the problem is in $\NP$.
 
 Now suppose case~(iii) holds. Then for each $(n_1^{(i)}, \ldots, n_k^{(i)}) \in N$
 with $i \in \{1,\ldots, n\}$
 one of the following conditions must hold, as otherwise
 we would be either in case (i) or (ii).
 \begin{enumerate}
 \item[(a)] If $n_{j_0}^{(i)} = n_{j_1}^{(i)} = \infty$ 
  for two distinct $j_0 \ne j_1$ with $j_0, j_1\in \{1,\ldots, k\}$
  then $n_j^{(i)} \in \{0,\infty\}$
  for all other $j \in \{1,\ldots, k\} \setminus\{ j_0, j_1 \}$.
 
 \item[(b)] If $n_{j_0}^{(i)} = \infty$
  and $n_{j}^{(i)} \ne \infty$ for $j \in \{1,\ldots, k\} \setminus \{j_0\}$.
  Then either $n_j^{(i)} = 0$ for all $j \in \{1,\ldots, k\} \setminus \{j_0\}$
  or $n_{j_1}^{(i)} = 1$ for some $j_1 \in \{1,\ldots, k\} \setminus \{j_0\}$
  and $n_j^{(i)} = 0$ for $j \in \{1,\ldots, k\}\setminus\{j_0, j_1\}$.
  
 \item[(c)] We have $n_j^{(i)} \ne \infty$ 
  for all $j \in \{1,\ldots, k\}$.
 \end{enumerate}
 We consider the  Constrained Synchronization
 Problem \ref{def:problem_L-constr_Sync}
 for the single language
 $$
  U_1^{(i)} \shuffle \ldots \shuffle U_k^{(i)}
 $$
 corresponding to the vector $(n_1^{(i)}, \ldots, n_k^{(i)})$
 and show that it is in $\PTIME$. In case (a) by Proposition \ref{prop:vector_zero_infty}
 the problem is in $\PTIME$.
 For case (b) by Proposition \ref{prop:vector_at_most_one_infty_at_most_one_one}
 the problem is in $\PTIME$. 
 In case (c) the corresponding
 language is finite, hence by Lemma \ref{lem:finite} in $\PTIME$.
 Taken together, by Lemma \ref{lem:union},
 the problem for $L$ is in $\PTIME$. $\qed$
\end{proof}

\subsection{Proof of Lemma~\ref{lem:words_in_single_state} (See page~\pageref{lem:words_in_single_state})}
\wordsinsinglestate*
\begin{proof}
 Notation as in the statement of the Lemma.
 First suppose $w \in \Sigma^*$
 with $\mu(t_0, w) = (s_1, \ldots, s_k)$.
 Then $\delta(s_0, a_j^{|w|_{a_j}}) = s_j$ for all $j \in \{1,\ldots, k\}$.
 Hence $a_j^{|w|_{a_j}} \in U_j$
 and as $w \in a_1^{|w|_{a_1}} \shuffle \ldots \shuffle a_k^{|w|_{a_k}}$
 we get $w \in U_1 \shuffle \ldots \shuffle U_k$.
 Conversely assume $w \in U_1 \shuffle \ldots \shuffle U_k$.
 Then as $|w|_{a_j} \in U_j$ we have
 $\delta(s_0, a_j^{|w|_{a_j}}) = s_j$ for all $j \in \{1,\ldots, k\}$.
 By definition this is equivalent with $\mu(t_0, w) = (s_1, \ldots, s_k)$. $\qed$
\end{proof}

\subsection{Proof of Theorem~\ref{thm:comm_aut_language} (See page~\pageref{thm:comm_aut_language})}
\commautlanguage*
\begin{proof}
 If $w \in L(\mathcal A)$ then by definition $\mu(t_0, w) \in E$, hence $w \in L(\mathcal C_{\mathcal A})$.
 Conversely suppose $w \in L(\mathcal C_{\mathcal A})$.
 Then $\mu(t_0, w) \in E$, which is equivalent with
 $\delta(s_0, a_j^{|w|_{a_j}}) = \delta(s_0, a_j^{|u|_{a_j}})$
 for some $u \in L(\mathcal A)$ and $j \in \{1,\ldots, k\}$.
 As $\delta(s_0, u) \in F$ and $L(\mathcal A)$ is commutative,
 we have\footnote{If $\mathcal A$ is the minimal automaton, then
 both states would be equal. Because it has the property that if $u$ is a permutation of $v$
 then $\delta(s_0, u) = \delta(s_0, v)$.
 For if $\delta(s_0, u) \ne \delta(s_0', v)$, then for one state, say
 $s = \delta(s_0, u)$, we would have some $w$ with $\delta(s, w) \in F$ and $\delta(\delta(s_0, v), w) \notin F$.
 But as $uw$ is a permutation of $vw$ this is not possible.
 But here $\mathcal A$ could be any automaton accepting the language, and the only thing that is retained
 under permuting letters is that, if we start in the start state, 
 either both words end in a final state or in a non-final state.} 
 $\delta(s_0, a_1^{|u|_{a_1}} a_2^{|u|_{a_2}} \cdots a_k^{|u|_{a_k}}) \in F$.
 This gives $$\delta(s_0, a_1^{|w|_{a_1}} a_2^{|u|_{a_2}} \cdots a_k^{|u|_{a_k}}) \in F$$
 as $\delta(s_0, a_1^{|u|_{a_1}}) = \delta(s_0, a_1^{|w|_{a_1}})$. Continuing similar
 $$
  \delta(s_0, a_2^{|u|_{a_2}} a_1^{|w|_{a_1}} \cdots a_k^{|u|_{a_k}}) \in F
 $$
 which gives $\delta(s_0, a_2^{|w|_{a_2}} a_1^{|w|_{a_1}} a_3^{|u|_{a_3}} \cdots a_k^{|u|_{a_k}}) \in F$.
 Doing this for all letters we find
 $$
  \delta(s_0, a_2^{|u|_{a_2}} a_1^{|w|_{a_1}} \cdots a_k^{|w|_{a_k}}) \in F
 $$
 which gives $\delta(s_0, w) \in F$, or $w \in L(\mathcal A)$. $\qed$
\end{proof}

\subsection{Proof of Theorem~\ref{thm:decision_procedure} (See page~\pageref{thm:decision_procedure})}
\decisionprocedure*
\begin{proof} 
 We can assume $\mathcal B$ is complete, otherwise we add a trap state. And if
 $\mu(s, a)$ is undefined for $s \in P$ and $a \in \Sigma$
 we add a transition to the trap state instead. This operation does not alters the accepted
 language.
 Construct the commutative automaton $\mathcal C_{\mathcal B}$
 which has at most $|Q|^{k}$ states. From it we can derive
 the form \eqref{eqn:shuffle_union_form} given in Corollary \ref{cor:shuffle_union_form}.
 From this form we can compute a vector set $N$
 according to Definition \ref{def:constraint_vector}, as it is easy
 to check if a unary language is finite or infinite.
 Also note that in this form the unary languages $U_j^{(l)}$
 could be accepted by unary automata with a single final state by the way 
 they are defined.
 Then $L(\mathcal B)$ is infinite if and only if in at least one vector
 the entry $\infty$ appears. The condition (i) 
 from Theorem \ref{thm:complete_classification}
 could be easily checked, also condition (ii).
 Hence by Theorem \ref{thm:complete_classification}
 this gives a decision procedure for the computational complexity
 of the resulting problem $L(\mathcal B)\textsc{-Constr-Sync}$.
 Every step could be performed in polynomial time. $\qed$
\end{proof}
\end{document}